\documentclass[12pt]{article}
\usepackage[utf8]{inputenc}

\usepackage{microtype}
\usepackage{graphicx}
\usepackage{subcaption}
\usepackage{booktabs}
\usepackage{placeins}
\usepackage{flafter}
\usepackage{amsmath,amssymb}
\usepackage{mathtools}
\usepackage{amsthm}
\usepackage{multirow}
\usepackage{pgffor}
\usepackage{tikz}
\usepackage{xstring}
\usepackage{etoolbox}
\usepackage{algorithm}
\usepackage{enumitem}
\usepackage{algorithmic}
\usepackage{times}
\usepackage{natbib}
\usepackage{xcolor}
\usepackage{setspace}
\usepackage{hyperref}

\definecolor{darkpowderblue}{rgb}{0.0, 0.2, 0.6}
\definecolor{cardinal}{rgb}{0.77, 0.12, 0.23}

\hypersetup{
    colorlinks=true,
    linkcolor=cardinal,
    citecolor=darkpowderblue,
    urlcolor=cyan,
    hypertexnames=false,
}

\usepackage[capitalize,noabbrev]{cleveref}

\providecommand{\theHalgorithm}{\arabic{algorithm}}

\newtheorem{example}{Example}
\theoremstyle{plain}
\newtheorem{theorem}{Theorem}[section]

\newtheorem{lemma}[theorem]{Lemma}

\theoremstyle{definition}
\newtheorem{definition}[theorem]{Definition}
\newtheorem{assumption}[theorem]{Assumption}
\theoremstyle{remark}

\newcommand{\calM}{\mathcal{M}}
\newcommand{\calV}{\mathcal{V}}
\newcommand{\calA}{\mathcal{A}}
\newcommand{\calS}{\mathcal{S}}

\newcommand{\calD}{\mathcal{D}}

\newcommand{\calH}{\mathcal{H}}

\newcommand{\seed}{\zeta}
\newcommand{\keys}{\xi}
\newcommand{\bit}{b}
\newcommand{\Chunk}{\mathcal{C}}
\newcommand{\selected}{s}
\newcommand{\abbr}{\text{HeRo}}

\newcommand{\R}{\mathbb{R}}
\newcommand{\Prob}{P_{\calM}}
\newcommand{\pr}{\mathbb{P}}
\newcommand{\given}{\mid}
\makeatletter
\renewcommand{\footnoterule}{%
  \kern -3pt
  \hrule width .62\textwidth height .4pt
  \kern 2.6pt
}
\makeatother

\title{Selective Disclosure Watermarking for Large Language Models}
\author{%
\textbf{Xuyang Chen}\textsuperscript{1}\quad
\textbf{Xiang Li}\textsuperscript{1}\quad
\textbf{Yangxinyu Xie}\textsuperscript{1}\quad
\textbf{Qi Long}\textsuperscript{1}
}
\date{}

\makeatletter
\renewcommand{\maketitle}{%
  \begin{center}
    \vspace*{0.1in}
    \rule{\textwidth}{0.8pt}\par
    \vspace{1.1em}
    {\Large\bfseries \@title\par}
    \vspace{1.1em}
    \rule{\textwidth}{0.8pt}\par
    \vspace{1.4em}
    {\normalsize \@author\par}
  \end{center}
  \footnotetext[1]{University of Pennsylvania. Correspondence to: Xuyang Chen \textless\texttt{xuyangc@sas.upenn.edu}\textgreater.}
  \setcounter{footnote}{1}
  \vspace{0.2em}
}
\makeatother

\begin{document}
\maketitle

\begin{abstract}

Watermarking methods embed imperceptible and verifiable signals into text generated by large language models (LLMs). Existing approaches include zero-bit schemes for distinguishing synthetic text from human writing and multi-bit schemes for embedding metadata. However, current multi-bit watermarking methods do not allow selective disclosure: verifying any part of the watermark requires revealing the entire embedded message. This lack of control leads to unnecessary information exposure and raises privacy concerns.
We propose \textbf{H}i\textbf{e}rarchical Vocabulary \textbf{Ro}uting (HeRo), a watermarking framework that enables selective disclosure of embedded metadata. The method recursively partitions the vocabulary and distributes watermark information across hierarchical layers, so that different verifiers can decode only the portions of the payload corresponding to their access level. We show that the proposed scheme preserves the unbiasedness of the underlying sampling process and thus maintains text quality. Experiments demonstrate that our framework supports fine-grained access control while achieving high detection accuracy and low latency. Code is available at \url{https://github.com/xuyangc03/hero-watermark}.
\end{abstract}

\section{Introduction}

The widespread adoption of large language models (LLMs) has fundamentally transformed content creation across many domains. Their ability to produce high-quality, human-like text at scale brings substantial benefits, but also introduces serious social risks~\citep{bommasani2021opportunities,weidinger2021ethical}, including academic dishonesty, harmful content generation, and misinformation~\citep{ranade2021generating,huang2023fakegpt}. Early attempts to distinguish AI-generated text from human writing relied on post-hoc, content-based detectors~\citep{mitchell2023detectgpt}. However, these methods (e.g., GPTZero, OpenAI's Detector) have shown limited effectiveness as LLMs continue to improve~\citep{weber2023testing}. At the same time, regulators are increasingly emphasizing provider-side content authentication mechanisms, such as transparency and machine-readable marking requirements in the EU AI Act \citep{act2024eu,laux2024trustworthy} and emerging state-level regulations in California \citep{CA_SB942_2024}.

Watermarking has emerged as a promising alternative by embedding algorithmically verifiable yet human-imperceptible signals directly into model outputs. A large body of prior work on LLM watermarking~\citep{kirchenbauer2023watermark,aaronson2024watermarking,zhao2024provable,dathathri2024scalable} focuses on zero-bit schemes, whose goal is to distinguish LLM-generated text from human writing. While effective for detection, it is essentially a binary classification problem, which is insufficient for many real-world provenance tracking and auditing applications.

To address this limitation, multi-bit watermarking methods~\citep{fernandez2023three,wang2024towards,yoo2024advancing,jiang2025stealthink,feng2025bimark} allow providers to embed richer metadata into generated text, such as model versions, timestamps, or user and session identifiers. However, existing designs typically do not support partial verification: any party with access to the verification key can recover the entire embedded payload. In practice, this all-or-nothing disclosure creates a fundamental deployment challenge. Many real systems require two conflicting properties: (1) broad verifiability, so that the public can confirm coarse provenance information, and (2) restricted auditing, so that sensitive metadata can be accessed only by authorized parties. For example, a platform moderator may only need to verify that content was produced by a specific provider or model family, whereas a privileged auditor may need access to confidential fields such as internal identifiers or user information~\citep{act2024eu,CA_SB942_2024}. Without fine-grained access control, providers must choose between full opacity and complete transparency, limiting the practical applicability of existing watermarking schemes.

In general, an effective multi-bit watermarking system should satisfy several key requirements. It should preserve generation quality, remain robust to common text perturbations, incur minimal decoding latency, and scale efficiently to large online corpora. Beyond these basic properties, we argue that it should also provide sufficient capacity to encode rich metadata and, crucially, enable selective disclosure: a verifier should recover only the information corresponding to its authorization level, without learning any additional embedded data.

To this end, we introduce a selective-disclosure watermarking framework based on hierarchical vocabulary partitioning. The framework organizes the vocabulary into a nested tree structure and composes a base watermarking rule recursively. At the top level, the vocabulary is partitioned to embed a coarse message, and the selected subset is then further partitioned to embed deeper payload layers (see Figure \ref{fig:hierarchical} for an illustration). This recursive construction creates a dependency chain in which lower-level payloads are statistically concealed within higher layers. As a result, the framework naturally supports hierarchical verification: a standard verifier can decode only the root-level signal (e.g., detection or coarse provenance), while deeper layers appear as noise, whereas privileged verifiers can resolve nested partitions to recover sensitive metadata. We instantiate the framework using Gumbel-based watermarking and evaluate detection accuracy, text quality, robustness under representative perturbations, and computational cost.

Our main contributions are as follows:
\begin{itemize}[leftmargin=*]
\item We propose a framework for selective-disclosure multi-bit watermarking in LLM decoding, enabling hierarchical verification under different authorization levels.
\item We provide theoretical guarantees on statistical unbiasedness and formalize selective disclosure by showing that unauthorized parties cannot recover payloads beyond random guessing.
\item We provide an efficient, batching-friendly GPU implementation and extensive experiments demonstrating strong detectability, text quality preservation, robustness to perturbations, and substantially lower generation/decoding latency than publicly released multi-bit watermarking implementations.

\end{itemize}

\section{Related Work}
\label{sec:related}

LLM watermarking methods can be categorized by when the watermark is embedded.
Post-processing watermarking rewrites already-generated text to inject detectable patterns, e.g., via controlled lexical substitutions such as synonym replacement~\citep{yang2023watermarking,munyer2024deeptextmark}. 
In this work we focus on inference-time watermarking, which embeds signals by modifying the sampling procedure during generation to correlate token selection with a secret key.

\paragraph{Zero-bit Watermarking.}
A representative zero-bit watermark is the Green-Red List scheme~\citep{kirchenbauer2023watermark}, which partitions the vocabulary into green and red lists and biases sampling toward green tokens. Detection is performed by testing whether the generated text contains an unusually high fraction of green tokens. Though simple, this watermark distorts the model’s output distribution and can degrade generation quality.
A subsequent line of work aims to debias this watermark through reweighting~\citep{hu2023unbiased,wu2023dipmark,xie2025debiasing}.
In parallel,~\citet{aaronson2024watermarking,kuditipudi2024robust,dathathri2024scalable} propose statistically unbiased sampling procedures that provably preserve the model output distribution while still enabling reliable detection.
Robust watermarking designs are further studied in~\citep{kuditipudi2024robust,zhao2024provable,li2025robust,qu2025provably}.

\paragraph{Multi-bit Watermarking.}
Multi-bit watermarking schemes embed metadata into generated text to meet auditing needs that go beyond binary detection.
Many multi-bit methods build on the Green-Red List paradigm~\citep{kirchenbauer2023watermark} by making the vocabulary partition message-dependent, so that different metadata payloads correspond to different preferred token subsets.
Early designs~\citep{fernandez2023three,qu2025provably} construct message-specific green lists by cyclically shifting a keyed vocabulary permutation according to the message.
\citet{wang2024towards} uses a proxy language model to form higher-quality vocabulary partitions, improving text quality at the cost of additional computation and an extra modeling assumption.
MPAC~\citep{yoo2024advancing} introduces a position allocation technique. Instead of encoding the entire payload at every token, it embeds a message subunit at each position so that different parts of the payload accumulate evidence across the text. This position-allocation viewpoint has been widely adopted by subsequent multi-bit methods.
StealthInk~\citep{jiang2025stealthink} and BiMark~\citep{feng2025bimark} design reweighting mechanisms to achieve statistical unbiasedness, aiming to preserve text quality while retaining watermarking capacity.

\paragraph{Metadata Access Control and Selective Disclosure.}

Despite progress on generation quality, robustness, and payload capacity, granular access control in LLM watermarking has received comparatively little attention.
Most existing schemes implicitly assume an all-or-nothing credential model: a verifier either can decode the watermark and recover the full payload or cannot decode anything at all.
The closest related direction is designated-detector watermarking~\citep{huang2024multi}, where cryptographic techniques restrict who can detect the presence of a watermark.
Unlike designated-detector watermarking, which controls who can detect a watermark, our goal is selective disclosure: different keys reveal different subsets of the embedded metadata from the same text.
To our knowledge, no existing LLM watermarking method provides role-based partial verification of embedded metadata within the same text.
Our work formalizes this selective-disclosure requirement and proposes a mechanism that enables hierarchical verification: low-privilege verifiers can validate or decode only coarse information, while deeper payload layers remain statistically concealed without the corresponding authorization.

\section{Preliminaries}
\label{sec:preliminaries}

\paragraph{Language Modeling Fundamentals.}
A language model $\calM$ is an autoregressive probabilistic model defined over a discrete vocabulary $\calV$ of size $V = |\calV|$. At each generation step, the model predicts the next token based on all preceding tokens. Given a prefix sequence $x_{<t} = (x_1, \dots, x_{t-1})$, the model produces a logit vector $l_t \in \R^V$ at step $t$. The next-token prediction (NTP) distribution $\Prob(\cdot \given x_{<t})$ is obtained by applying the softmax function to these logits:
\begin{equation}
    \Prob(x_t = v \given x_{<t}) = \frac{\exp((l_t)_v)}{\sum_{j=1}^V \exp((l_t)_j)}.
\end{equation}
In standard generation, the next token $x_t$ is sampled from this NTP distribution: $x_t \sim \Prob(\cdot \given x_{<t})$.

\paragraph{Watermarking via Modified Sampling.}
Watermarking embeds a discrete payload $m$ (or a metadata message) into generated text by modifying only the token sampling process, while keeping the underlying NTP distribution unchanged. At each generation step $t$, a pseudo-random function (PRF) $\calA$~\citep{goldreich1986construct} maps a local context window of size $h$, $x_{t-h:t-1} = (x_{t-h}, \dots, x_{t-1})$, together with a secret key $\keys$, to a pseudorandom vector $\seed_t = \calA(x_{t-h:t-1}, \keys) \in \R^V$. The next token is generated by a (deterministic) sampling function $\calS$:
\begin{equation}
    x_t = \calS(\Prob(\cdot \given x_{<t}), \seed_t).
\end{equation}

\paragraph{Multi-bit Payloads and Position Allocation.}
In multi-bit watermarking, the payload $m$ is represented as a sequence of message segments $m = (m_1, \dots, m_K)$. Following the position-allocation strategy of MPAC~\citep{yoo2024advancing}, a rule $p(t) \in \{1,\dots,K\}$ specifies which segment is embedded at generation step $t$, so that token $x_t$ encodes only the single segment $m_{p(t)}$ rather than the full payload $m$. 
For example, if the total payload has 24 bits and each allocated position carries 2 bits, then the payload is divided into $K=12$ segments, and the position-allocation rule assigns these segments to generation steps.
During detection, the detector applies a function $\hat{m} = \calD(x, \keys)$ to recover the embedded payload from a generated $x$.

\begin{definition}[Statistical Unbiasedness]\label{def:unbiased}
A watermarking scheme is statistically unbiased if, conditioned on any prefix $x_{<t}$, the marginal distribution of each generated token $v \in \calV$ equals the original NTP distribution:
\[
\pr_{\seed_t}\!\left(\calS(\Prob(\cdot \given x_{<t}), \seed_t, m) = v \right)
= \Prob(x_t = v \given x_{<t}),
\]
where the probability is taken over the randomness induced by the secret key.
\end{definition}

\paragraph{Statistical Unbiasedness.}
We require the watermarking scheme to be statistically unbiased (see Definition \ref{def:unbiased}), so that watermarking preserves the model’s NTP distribution and does not degrade generation quality.
A canonical example of an unbiased sampling rule is Gumbel-Max sampling in Definition \ref{def:gumbel-max}. We use it as a convenient sampling rule to illustrate our framework.

\begin{definition}[Gumbel-Max Sampling]\label{def:gumbel-max} 
The pseudorandom variable $\zeta_t$ is defined as $\zeta_t = (U_{t,v})_{v \in \calV}$, where $U_{t,v} \overset{i.i.d.}{\sim} \mathrm{Uniform}(0,1)$. This construction assigns independent uniform randomness to each candidate token. The next token is selected as $x_t = \arg\max_{v \in \calV} (l_t)_v - \log(-\log(U_{t,v})).$
\end{definition}

\section{Methodology}
\label{sec:method}

\begin{figure*}[ht!]
    \centering
    \includegraphics[width=0.9\textwidth]{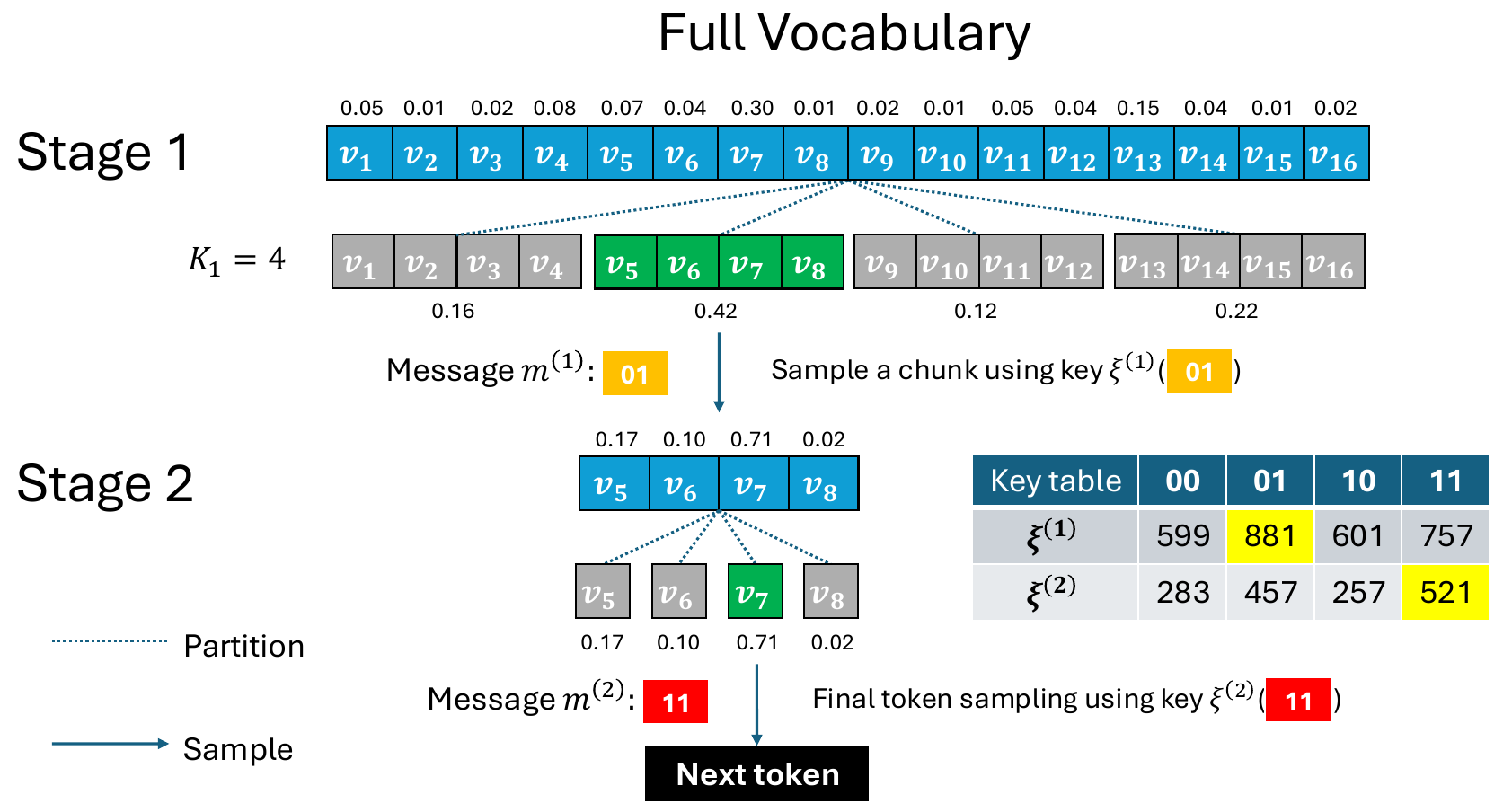}
    \caption{
Hierarchical vocabulary routing on a toy vocabulary of size $V=16$ with a chunking schedule $K=(K_1)=(4)$.
At stage $\ell=1$, the current candidate set (blue) is partitioned into $K_1=4$ contiguous chunks (dotted connectors).
Using the level-1 key table $\keys^{(1)}$ and message $m^{(1)}$, the sampler draws a chunk (solid arrows) and restricts the candidate set to the selected chunk (green).
After the routing stage, the final token (black) is sampled from the remaining candidates using key table $\keys^{(2)}$ and message $m^{(2)}$.
}
    \label{fig:hierarchical}
\end{figure*}

\paragraph{Overview.}
In this section, we introduce our watermarking method. The key idea is to embed information by guiding the token sampling process through a hierarchical partition of the vocabulary. Starting from the full vocabulary, we progressively refine the partition and select a smaller subset of candidate tokens at each stage. The choice of which subset to refine is controlled by pseudorandomness derived from a secret key and is tied to a specific segment of the payload. By repeating this process across multiple layers, the method embeds a multi-bit message while preserving the original sampling distribution.

\paragraph{A Two-Layer Example.}
Figure~\ref{fig:hierarchical} illustrates the procedure using a two-layer construction. Consider a two-level payload $(m^{(1)}, m^{(2)})$. At the first layer, the vocabulary is partitioned into $K_1=4$ disjoint chunks, each assigned an aggregated probability equal to the sum of the original next-token probabilities of the tokens it contains. Equivalently, each chunk can be viewed as a ``meta-token,'' and a sampling step is performed over these chunks according to their aggregated probabilities. The pseudorandomness used in this step is obtained from the entry corresponding to $m^{(1)}$ in the first key table $\boldsymbol{\keys}^{(1)}$, i.e., $\boldsymbol{\keys}^{(1)}(m^{(1)})$.
Once a chunk is selected, the same procedure is applied recursively within the selected chunk to embed $m^{(2)}$: the chunk is further partitioned, aggregated probabilities are computed, and sampling is performed again using pseudorandomness $\boldsymbol{\keys}^{(2)}(m^{(2)})$ determined by the next message segment. Because each selection step samples according to the appropriate (aggregated or conditional) probability distribution, the overall procedure remains statistically unbiased, and the final sampled token follows the original NTP distribution.

\subsection{General $L$-Layer Generation}

\begin{algorithm}[t]
\caption{Generation via HeRo at step $t$}
\label{alg:generation}
\begin{algorithmic}[1]
\REQUIRE Payload $(m^{(1)},\dots,m^{(L)})$, NTP distribution $P_t(\cdot)=\Prob(\cdot \given x_{<t})$, context window $x_{(t-h):(t-1)}$, sampling function $\calS$, PRF $\calA$, 
        key tables $\{\boldsymbol{\keys}^{(\ell)}\}_{\ell=1}^L$, chunking schedule $(K_1,\dots,K_{L-1})$
\ENSURE Sampled token $x_t$

\STATE Offset $o\leftarrow 0$ and current chunk size $v\leftarrow V$
\FOR{$\ell = 1$ \textbf{to} $L-1$}
    \STATE Partition $[o,o+v) = \cup_{i=1}^{K_\ell} \Chunk^{(\ell)}_{i}$ into $K_\ell$ contiguous chunks
    \STATE $w_i^{(\ell)} \leftarrow \sum_{x\in \Chunk^{(\ell)}_{i}} P_t(x)$ for $i=1,\dots,K_\ell$
    \STATE Normalize $(w_1^{(\ell)},\ldots,w_{K_\ell}^{(\ell)})$ to obtain a probability distribution over the $K_\ell$ chunks
    \STATE Identify the secret key $\keys \leftarrow \boldsymbol{\keys}^{(\ell)}(m^{(\ell)})$
    \STATE Get pseudorandom vector $\seed^{(\ell)}_t \leftarrow \calA(x_{(t-h):(t-1)}, \keys)$
    \STATE Sample the next chunk $s_{t,\ell} \leftarrow\calS((w_i^{(\ell)})_{i=1}^{K_\ell}, \seed^{(\ell)}_t)$
    \STATE Update offset $o$ and chunk size $v$ according to the selected chunk $\Chunk^{(\ell)}_{s_{t,\ell}}$
\ENDFOR
\STATE Identify the secret key $\keys \leftarrow \boldsymbol{\keys}^{(L)}(m^{(L)})$
\STATE Get pseudorandom $\seed^{(L)}_t \leftarrow \calA(x_{(t-h):(t-1)}, \keys)$
\STATE Sample index $s_{t,L} \leftarrow \calS(P_t|_{[o,o+v)}, \seed^{(L)}_t)$
\STATE \textbf{return} $x_t \leftarrow o+s_{t,L}$
\end{algorithmic}
\end{algorithm}

Now, we describe how to embed an $L$-level payload into generated text using hierarchical vocabulary routing. The routing structure is specified by a chunking schedule $K=(K_1,\ldots,K_{L-1})$, where $K_\ell$ denotes the number of partitions (chunks) at routing stage $\ell$. The sampler starts from the full vocabulary $\Chunk^{(0)}=\mathcal{V}$ and proceeds through $L$ sequential stages, repeatedly refining the candidate set until a single token is selected. Earlier stages make coarse routing decisions over large vocabulary regions, while later stages refine the decision within the selected region; each routing decision carries one level of payload information. The full procedure is summarized in Algorithm~\ref{alg:generation}.

\vspace{-0.5em}
\paragraph{Stages $\ell=1, \ldots, L-1$ (Chunk Routing).}

At routing stage $\ell$, the current candidate chunk $\Chunk^{(\ell-1)}$ is partitioned into $K_\ell$ contiguous chunks (line 3 in Algorithm~\ref{alg:generation})
\[
(\Chunk^{(\ell)}_{1}, \Chunk^{(\ell)}_{2}, \ldots, \Chunk^{(\ell)}_{K_\ell})
= \mathrm{Partition}(\Chunk^{(\ell-1)}),
\]
with sizes differing by at most one.\footnote{Suppose $v,k\in \mathbb{N}^+$ and $v=ak+r$, where $0\leq r<k$, we have $v=a(k-r)+(a+1)r$. This means we can always partition a large chunk into small chunks with max difference size one: $k-r$ chunks with size $a$ and $r$ chunks with size $a+1$.}
We then coarse-grain the token distribution by aggregating probabilities within each chunk, which yields a categorical distribution over the $K_\ell$ chunks (line 4--5). Using the pseudorandom variable $\seed^{(\ell)}$ associated with the payload component $m^{(\ell)}$, we apply the sampling function $\calS$ to select a chunk $\Chunk^{(\ell)}=\Chunk^{(\ell)}_{\selected_\ell}$ and restrict the candidate set to the selected chunk (line 8).

\vspace{-0.5em}
\paragraph{Stage $\ell=L$ (Final Token Sampling).}
After $L-1$ routing decisions, we obtain a final candidate chunk $\Chunk^{(L-1)}$. We treat each token in this chunk as a separate category and sample the final token $x_t$ from the newly normalized distribution restricted to this subset, guided by the last-level payload $m^{(L)}$.
Each generated token induces a nested path of vocabulary subsets: $\mathcal{V} = \Chunk^{(0)} \supset \Chunk^{(1)}\supset \dots \supset \Chunk^{(L-1)}\supset \Chunk^{(L)}=\{x_t\},$
where the routing decision at stage $\ell$ is controlled by the level-$\ell$ message component. The following theorem formalizes that this hierarchical sampling procedure preserves the original NTP distribution.
\begin{theorem}[Statistical Unbiasedness]\label{thm:unbiasedness}
The hierarchical routing sampler is statistically unbiased: the final sampled token $x_t$ follows the original NTP distribution.
\end{theorem}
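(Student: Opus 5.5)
The plan is to prove the statement by induction along the routing stages, using one local property of the base sampler $\calS$: for any categorical distribution $q$ and a seed $\seed$ drawn from the key-induced randomness, $\pr_{\seed}(\calS(q,\seed)=i)=q_i$. For Gumbel-Max sampling (Definition~\ref{def:gumbel-max}) this is the Gumbel-Max trick. It holds for any number of categories, so it applies both to the $K_\ell$ meta-tokens at the routing stages and to the restricted distribution at the final stage. I also need the seeds $\seed^{(1)}_t,\dots,\seed^{(L)}_t$ used at the different stages of step $t$ to be mutually independent. This follows from modelling $\calA$ as a PRF evaluated under distinct keys $\boldsymbol{\keys}^{(\ell)}(m^{(\ell)})$ from different key tables, so that the outputs are treated as independent uniform vectors. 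With these in hand, the message $m$ only selects which key is used, and it drops out of the distributional computation.

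First, fix a token $v\in\calV$. Because the partitions are deterministic, contiguous and nested, $v$ determines a unique path of chunks $\Chunk^{(1)}(v)\supset\cdots\supset\Chunk^{(L-1)}(v)\ni v$. The event $\{x_t=v\}$ is exactly the event that every routing stage $\ell$ selects $\Chunk^{(\ell)}(v)$ and the final stage selects $v$. Second, write $P_t(A)=\sum_{x\in A}P_t(x)$ and condition on the chunk chosen at stage $\ell-1$ being $\Chunk^{(\ell-1)}(v)$. Then the normalized weights at stage $\ell$ are $w_i^{(\ell)}=P_t(\Chunk^{(\ell)}_i)/P_t(\Chunk^{(\ell-1)}(v))$, and stage $\ell$ uses a fresh seed $\seed^{(\ell)}_t$ that is independent of the earlier stages. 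The local property therefore gives conditional probability $P_t(\Chunk^{(\ell)}(v))/P_t(\Chunk^{(\ell-1)}(v))$ of continuing along the path. In the same way, the final stage selects $v$ with conditional probability $P_t(v)/P_t(\Chunk^{(L-1)}(v))$.

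Third, multiply these conditional probabilities by the chain rule. The product telescopes to $P_t(v)/P_t(\Chunk^{(0)})=P_t(v)$, since $\Chunk^{(0)}=\calV$ has total mass $1$. This holds for every $v$ and every payload $m$, which is exactly Definition~\ref{def:unbiased}. The same argument can be written as an induction on $L$: stage $1$ is an unbiased draw over meta-tokens, and the remaining $L-1$ stages form an unbiased sampler for the conditional distribution restricted to the chosen chunk.

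I expect two places to need care. The first is the degenerate case where a chunk has zero mass. Such a chunk is selected with probability zero, so conditioning on it never occurs, and the telescoping terms can be restricted to $v$ with $P_t(v)>0$. The second, and the main obstacle, is justifying that the stage-wise seeds are independent and that each is uniform conditional on the prefix $x_{<t}$. The cleanest route is to state the idealized assumption explicitly: the outputs of $\calA$ under different level keys are independent uniforms, and the PRF's pseudorandomness transfers the result to the real scheme up to negligible advantage.
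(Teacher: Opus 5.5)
Your proposal is correct and follows essentially the same argument as the paper. Both fix $v$, use the unique nested chunk path given by the deterministic partition, factor $\pr(x_t=v)$ into stage-wise conditional selection probabilities, and telescope to $P_t(v)$. The paper simply posits the conditional probabilities $w^{(\ell)}_i/\sum_{x\in\Chunk^{(\ell-1)}}P_t(x)$. Your explicit hypotheses make precise what that step silently relies on: exactness of $\calS$ on arbitrary categorical inputs, independence of the per-level seeds under the idealized PRF, and the zero-mass case.
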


\subsection{Decoding Multi-Level Payload}
\label{sec:decoding}
\begin{algorithm}[t]
\caption{Message Decoding via HeRo}
\label{alg:decoding}
\begin{algorithmic}[1]
\REQUIRE Generated sequence $x=(x_1,\ldots,x_T)$, vocabulary size $V$, context window size $h$, PRF $\calA$, evidence function $\mathrm{Ev}$,
         key tables $\{\boldsymbol{\keys}^{(\ell)}\}_{\ell=1}^L$,
         chunking schedule $(K_1,\dots,K_{L-1})$
\ENSURE Decoded payload $(\hat m^{(1)},\ldots,\hat m^{(L)})$

\STATE \COMMENT{Stage 1: per-token evidence}
\FOR{$t = h+1$ \textbf{to} $T$}
    \STATE Recover the chunk-index path $(s_{t,1},\dots,s_{t,L})$ from $x_t$
    \FOR{$\ell = 1$ \textbf{to} $L$}
        \FOR{$a = 0$ \textbf{to} $2^{b_\ell}-1$}
            \STATE $\seed^{(\ell)}_t(a) \leftarrow \calA\!\left(x_{(t-h):(t-1)},\,\boldsymbol{\keys}^{(\ell)}(a)\right)$
            \STATE $E^{(\ell)}_t(a) \leftarrow \mathrm{Ev}\!\left(\seed^{(\ell)}_t(a),\,s_{t,\ell}\right)$
        \ENDFOR
    \ENDFOR
\ENDFOR

\STATE \COMMENT{Stage 2: aggregate evidence and decode each level}
\FOR{$\ell = 1$ \textbf{to} $L$}
    \STATE Decode $\hat m^{(\ell)}$ from the per-token evidences $\{E^{(\ell)}_t(a)\}_{t,a}$
\ENDFOR
\STATE \textbf{return} $(\hat m^{(1)},\ldots,\hat m^{(L)})$
\end{algorithmic}
\end{algorithm}

The decoding procedure follows the reverse logic of generation (Algorithm~\ref{alg:decoding}).
Given a generated sequence $x=(x_1,\ldots,x_T)$, the decoder processes each token position independently and infers, from the token identity alone, the hierarchical routing decisions made during generation.
Concretely, the deterministic partition rule maps each token $x_t$ to a unique sequence of routing indices $\{s_{t,\ell}\}_{\ell=1}^{L}$.
For $\ell=1,\ldots,L-1$, the index $s_{t,\ell} \in \{1, \ldots, K_{\ell}\}$ records the selected chunk at routing stage $\ell$.
At the final stage, we equivalently treat each token in the last selected chunk as a singleton chunk, so $s_{t,L}$ again denotes a chunk index.
This sequence serves as the observable footprint of the hierarchical routing process.

We now describe the decoding procedure more formally.
Fix a level $\ell$, a party with access to the level-$\ell$ key table $\boldsymbol{\keys}^{(\ell)}$ computes the pseudorandom variable for each candidate message value $a\in\{0,\ldots,2^{b_\ell}-1\}$ as
\[
\seed^{(\ell)}_t(a)=\calA\!\left(x_{(t-h):(t-1)},\,\boldsymbol{\keys}^{(\ell)}(a)\right).
\]
The decoder then combines $\seed^{(\ell)}_t(a)$ with the observed chunk index $s_{t,\ell}$ to form a per-token evidence value
\[
E^{(\ell)}_t(a)=\mathrm{Ev}\!\left(\seed^{(\ell)}_t(a),\,s_{t,\ell}\right).
\]
The evidence function $\mathrm{Ev}$ is designed so that for an incorrect candidate $a\neq m^{(\ell)}$, $E^{(\ell)}_t(a)$ follows a known null distribution, while the true message produces stochastically larger evidence. Aggregating the evidence across token positions (e.g., by summation) yields a score for each candidate message, and the decoded payload at level $\ell$ is given by
\[
\hat m^{(\ell)}=\arg\max_{a\in\{0,\ldots,2^{b_\ell}-1\}} \mathrm{Agg}\big(\{E^{(\ell)}_t(a)\}\big).
\]

\begin{example}
As a concrete example, we use Gumbel-Max sampling as the sampling rule $\calS$ in this work.
For routing stages $\ell<L$, the pseudorandom variable
$\seed^{(\ell)}_t(a) = (U_{t,i})_{i=1}^{K_\ell}$
consists of $K_\ell$ i.i.d.\ $\mathrm{Uniform}(0,1)$ random variables, which are used to sample the next chunk from the partition
$(\Chunk^{(\ell)}_{1}, \Chunk^{(\ell)}_{2}, \ldots, \Chunk^{(\ell)}_{K_\ell})$.
At the final stage $\ell=L$, the same construction applies to the last selected chunk.
The evidence function is the Aaronson score used for detection \citep{aaronson2024watermarking}:
\[
\mathrm{Ev}(\seed,s) = -\log\!\big(1-\seed(s)\big),
\]
where $\seed(s)$ denotes the $s$-th entry of the vector $\seed$.
Finally, we simply set $\mathrm{Agg}\big(\{E^{(\ell)}_t(a)\}\big) = \sum_{t} E^{(\ell)}_t(a)$.
\end{example}

Without access to the level-$\ell$ key table $\boldsymbol{\keys}^{(\ell)}$, a verifier cannot reproduce the pseudorandom variables corresponding to the true message value $m^{(\ell)}$.
As a result, the per-token evidence $\{E^{(\ell)}_t(a)\}_a$ is statistically indistinguishable across candidate messages, and decoding at level $\ell$ reduces to random guessing.
The following theorem formalizes this selective disclosure property.

\begin{theorem}[Selective Disclosure]\label{thm:selective_disclosure}
Fix a disclosure level $k\in\{1,\dots,L\}$. Consider a verifier that possesses only the key tables $\boldsymbol{\keys}^{(1)},\dots,\boldsymbol{\keys}^{(k)}$.
Then, for any $\ell>k$, the verifier cannot decode $m^{(\ell)}$ beyond random guessing.
\end{theorem}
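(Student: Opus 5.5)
We will prove the statement in the idealized model in which the PRF $\calA$ is replaced by a truly random function, so that each key table entry $\boldsymbol{\keys}^{(\ell)}(a)$ induces fresh independent uniform vectors $\seed^{(\ell)}_t(a)$. Pseudorandomness of $\calA$ then transfers the conclusion to the real scheme up to a negligible distinguishing advantage. The goal is the following claim. For any $\ell>k$ and any message value $a$, the joint distribution of everything the verifier observes is the same whether $m^{(\ell)}=a$ or $m^{(\ell)}=a'$. This observed data consists of the text $x=(x_1,\dots,x_T)$ together with its own key tables $\boldsymbol{\keys}^{(1)},\dots,\boldsymbol{\keys}^{(k)}$. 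Once this invariance holds, any decoder $\hat m^{(\ell)}(x,\boldsymbol{\keys}^{(1:k)})$ satisfies $\pr(\hat m^{(\ell)}=m^{(\ell)})=2^{-b_\ell}$ when $m^{(\ell)}$ is uniform. More generally, its success probability is no better than that of a guess made without looking at $x$.

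The key step is a per-token computation, done conditionally on the prefix $x_{<t}$, on the level-$1,\dots,k$ randomness $\seed^{(1)}_t,\dots,\seed^{(k)}_t$, and on the full payload. Under the ideal model, the variables $\seed^{(\ell')}_t$ for $\ell'>k$ are uniform and independent of the verifier's view. They are also independent of everything the verifier knows, since the verifier lacks $\boldsymbol{\keys}^{(\ell')}$. The routing indices $s_{t,1},\dots,s_{t,k}$ are deterministic functions of $P_t$ and $\seed^{(1:k)}_t$, so the candidate chunk $\Chunk^{(k)}$ is fixed by the conditioning. The remaining stages $k+1,\dots,L$ run the hierarchical sampler restricted to $\Chunk^{(k)}$, fed with the fresh uniform randomness $\seed^{(k+1)}_t,\dots,\seed^{(L)}_t$. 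Applying the argument of Theorem~\ref{thm:unbiasedness} to this sub-sampler, namely the Gumbel-Max unbiasedness at each stage chained through the aggregated and conditional probabilities, shows that $x_t$ is distributed as $P_t(\cdot)/P_t(\Chunk^{(k)})$ on $\Chunk^{(k)}$. This law does not depend on $m^{(k+1)},\dots,m^{(L)}$. The message value only selects which key produces the uniform vector, and every key yields an identically distributed uniform vector.

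Next we chain over positions $t$. The subtle point, and what we expect to be the main obstacle, is independence across time. The vectors $\seed^{(\ell)}_t(a)$ at different positions depend on the context window $x_{(t-h):(t-1)}$. If a context repeats, the same randomness is reused, and correlations could in principle leak information about $m^{(\ell)}$. We would handle this in one of two ways. The first is to assume, as is standard, that contexts are distinct, or that repeated contexts are skipped both at generation and at detection, so that the random function gives fresh randomness at every counted position. The second is to note that even with repeats, the entire process for levels $>k$ is a function of the random function evaluated under key $\boldsymbol{\keys}^{(\ell)}(m^{(\ell)})$. Because the key table entries are i.i.d. secret keys, relabelling $a\leftrightarrow a'$ is a measure-preserving bijection of the joint distribution of keys. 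This symmetry argument avoids needing freshness at all. The law of $(x,\boldsymbol{\keys}^{(1:k)})$ is then invariant under changing $m^{(\ell)}$. Multiplying the per-token conditional laws via the chain rule gives the sequence-level invariance, and the random-guessing bound follows. Finally, we would add the PRF hybrid step: a verifier beating $2^{-b_\ell}$ by a non-negligible amount in the real scheme would yield a distinguisher for $\calA$.
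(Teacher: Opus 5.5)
Your proof is correct, and it takes a genuinely different route from the paper's. The paper does not derive secrecy from the generation mechanism. It posits Assumption~\ref{ass:exchangeable_null}, namely exchangeability of the score vector $T^{(\ell)}(\cdot)$ given $x$ and $\boldsymbol{\keys}^{(1:k)}$, and then uses the bit-flip permutation $\pi_r$ in Lemma~\ref{lem:exchangeable_bit} to argue that each decoded bit is a fair coin. You instead prove that the law of the verifier's whole view $(x,\boldsymbol{\keys}^{(1:k)})$ does not depend on $m^{(\ell)}$. Your ingredients are the unbiased sub-sampler started at $\Chunk^{(k)}$, fresh contexts (which the context masking of Algorithm~\ref{alg:generation_detailed} guarantees), the chain rule, and a PRF hybrid.

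This buys several things. It covers arbitrary, even randomized, decoders of the data the verifier actually holds, and it gives message-level success $2^{-b_\ell}$. It also supplies what the paper's argument needs but does not establish, namely $\hat m^{(\ell)}\perp m^{(\ell)}$. To see why this is needed, note that exchangeability of $T^{(\ell)}$ given $x$ holds even when $T^{(\ell)}$ is computed with the true table: relabel the table entries and the message by the same permutation, and the text is unchanged. In that setting the argmax of $T^{(\ell)}$ decodes well. So going from ``$\hat b_r$ is a fair coin'' to agreement probability $\tfrac12$ requires independence from $\bit_r(m^{(\ell)})$, and that independence is exactly your invariance. The paper's version is shorter and generic in the evidence function, but its content is carried by the assumption; yours proves the statement from the construction.

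Three refinements follow.

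\textbf{The symmetry option alone suffices.} Suppose the table entries are i.i.d.\ and independent of $\boldsymbol{\keys}^{(1:k)}$. Then $x$ depends on $\boldsymbol{\keys}^{(\ell)}$ only through $\boldsymbol{\keys}^{(\ell)}(m^{(\ell)})$, whose law is the same for every value of $m^{(\ell)}$. The invariance is then exact and information-theoretic, with no random-function idealization, no freshness requirement, and no reduction.

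\textbf{Position allocation weakens the symmetry argument.} Several segments share one table, so symmetry alone yields invariance only under a relabelling applied to all segments simultaneously. Under a uniform prior this still makes each segment's posterior uniform. Your fresh-context route gives the stronger conclusion of joint independence from all level-$>k$ segments.

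\textbf{A wording fix in the per-token step.} The level-$>k$ vectors at step $t$ are independent of the conditioning (the prefix and the level-$\le k$ randomness). They are not independent of the verifier's view, which contains $x_t$.
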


\subsection{Computational Complexity Analysis}
\label{sec:complexity}

We analyze the computational cost of generation and decoding for a single generation step $t$.
Throughout, we assume that Gumbel-Max sampling is used at all routing stages.

\vspace{-0.5em}
\paragraph{Generation.}
At step $t$, each routing stage $\ell\in\{1,\dots,L-1\}$ partitions the current candidate set into $K_\ell$ contiguous chunks and computes their probability masses $\{p_i\}_{i=1}^{K_\ell}$.
This can be implemented efficiently by first computing the prefix-sum (CDF)
$F_t(u)=\sum_{v\le u} P_t(v)$ in $O(V)$ time.
Each chunk mass can then be obtained using two CDF queries, yielding an $O(K_\ell)$ cost per stage.
Under Gumbel-Max sampling, drawing a chunk index from a $K_\ell$-way categorical distribution also takes $O(K_\ell)$ time.

After $L-1$ routing stages, the remaining candidate set has size approximately
$V/\prod_{\ell=1}^{L-1}K_\ell$, and sampling the final token costs
$O\!\left(V/\prod_{\ell=1}^{L-1}K_\ell\right)$.
Overall, the per-step generation complexity is
$O\left(V + \sum_{\ell=1}^{L-1} K_\ell + V/\prod_{\ell=1}^{L-1} K_\ell\right)=O(V)$.
In practice, we implement generation in a batched manner to improve wall-clock throughput.
The additional overhead introduced by watermarking is negligible compared to the model forward pass, since our complexity depends only on the vocabulary size, which remains essentially constant across scales within each model family (see Table~\ref{tab:model_size_vs_vocab_size} in Appendix~\ref{app:model_size_vs_vocab_size}).

\vspace{-0.5em}
\paragraph{Decoding.}
Decoding operates solely on the generated text and the key tables.
At each level $\ell$, the decoder evaluates $2^{b_\ell}$ candidate message values.
Under Gumbel-Max sampling, the evidence function $\mathrm{Ev}(\seed,s)$ can be computed without reconstructing the full pseudorandom vector $\seed$, as it requires only the random variate $\seed_s$ associated with the observed token index $s$.
By using a counter-based pseudorandom number generator~\citep{salmon2011parallel}, this computation takes $O(1)$ time and $O(1)$ memory per candidate.
As a result, the total per-step decoding complexity is $O\!\left(\sum_{\ell=1}^{L} 2^{b_\ell}\right).$

\section{Experiments}
\label{sec:experiments}

\paragraph{Considered Watermarks.}
We evaluate the Hierarchical Vocabulary Routing framework instantiated with Gumbel-Max sampling. We consider both single-level and two-level hierarchical payload configurations in the main text. The two-level configuration enables selective disclosure by separating public verification (first level) from private auditing (second level) using different secret keys. The single-level configuration does not provide hierarchical access control and therefore serves as an internal baseline. We additionally compare against prior multi-bit watermarking methods, including MPAC~\citep{yoo2024advancing}, StealthInk~\citep{jiang2025stealthink}, and BiMark~\citep{feng2025bimark}. We further investigate deeper hierarchies (up to 8 levels) and report the results in Appendix~\ref{app:full_exp_hierarchy_depth}.
\paragraph{Models and Datasets.}
We conduct experiments using \texttt{Llama2-7B}~\citep{touvron2023llama}. For text generation, we use the \texttt{C4} \texttt{realnewslike} dataset~\citep{raffel2020exploring}, which contains formal journalistic content, and the \texttt{OpenGen} dataset~\citep{krishna2023paraphrasing}, which covers conversational and creative text. From each dataset, we sample 1{,}000 documents and truncate a fixed number of initial tokens to form generation prompts. We report results on \texttt{C4} in the main text, with additional results on \texttt{OpenGen} provided in Appendix~\ref{app:full_exp_results_detectability} and Appendix~\ref{app:full_exp_results_ppl}.
\paragraph{Evaluation Metrics.}
We evaluate watermarking schemes along five dimensions. 
\textbf{Selective disclosure} is assessed by comparing decoding outcomes across authorization levels. 
\textbf{Detectability} is measured by bit-level message decoding accuracy. 
\textbf{Text quality} is quantified using perplexity (PPL). 
\textbf{Efficiency} is reported as wall-clock generation and decoding time. 
\textbf{Robustness} is evaluated by decoding accuracy under random replacement and roundtrip translation.

\subsection{Selective Disclosure Evaluation}
\label{sec:exp_selective}
To evaluate selective disclosure, we consider two-level hierarchical payloads, where the first level is treated as public and the second as private. This setting induces two verifier capabilities:
(i) \textbf{Full authorization}: the verifier holds secret keys for both levels and can decode both payloads and
(ii) \textbf{Public-only authorization}: the verifier holds only the public secret key and can decode the public payload, while private payload decoding is no better than random guessing.

\begin{figure}[ht!]
    \centering
    \includegraphics[width=0.7\textwidth]{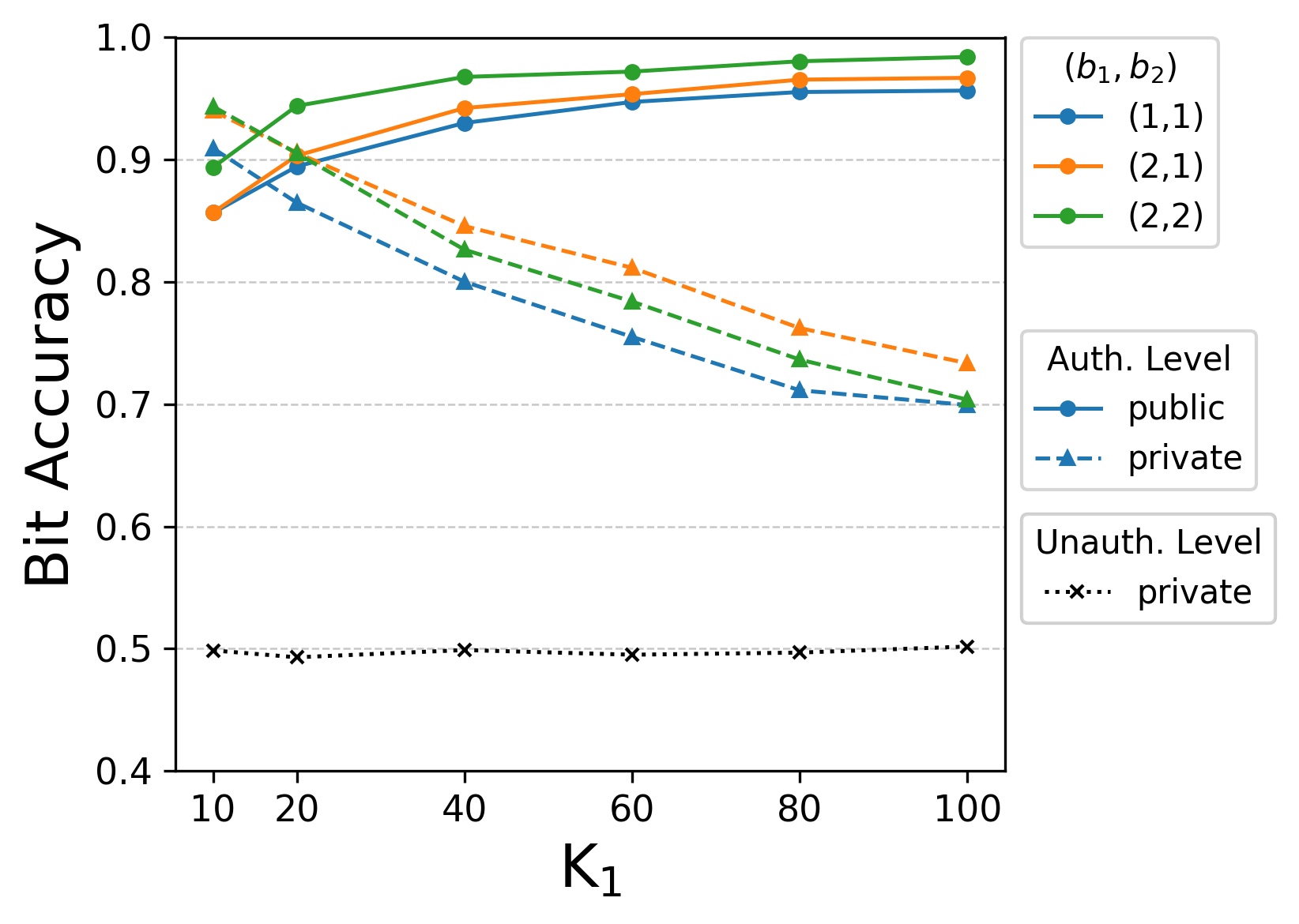}
    \caption{Bit accuracy as a function of the number of first-layer chunks $K_1$: public payload accuracy (solid), private payload accuracy for fully authorized verifiers (dashed) and private payload accuracy under public-only authorization (dotted).}
    \label{fig:bitacc_pub_priv}
\end{figure}

We fix the total payload size at 24 bits and compare three per-level allocations: $(1,1)$, $(2,1)$, and $(2,2)$. Figure~\ref{fig:bitacc_pub_priv} illustrates a clear trade-off between public and private decoding accuracy for the fully authorized verifier. As the number of public chunks $K_1$ increases, more signal is allocated to the public level, leading to improved and eventually saturated public decoding accuracy. Conversely, private decoding accuracy decreases as less signal remains available for the private level. This trade-off is best balanced at $K_1 = 20$, which we adopt in subsequent experiments.

Finally, we verify the selective-disclosure guarantee by showing that a public-only verifier achieves approximately chance-level accuracy ($\approx 50\%$) on the private payload across all settings (see dotted line in Figure~\ref{fig:bitacc_pub_priv}), consistent with Theorem~\ref{thm:selective_disclosure}.

\subsection{Detectability and Text Quality}
\label{sec:exp_detection}
\paragraph{Detection Accuracy.}

\begin{table*}[t]
\centering
\caption{Comparison of bit accuracy (B.Acc.$\uparrow$) and median perplexity (PPL) for \texttt{Llama2-7B} generations on \texttt{C4} across payload sizes (12, 24, 36, and 48 bits). Results are computed over 200 generated tokens. MPAC is parameterized by $(b,\delta)$, where $\delta$ denotes the bias added to green-token logits; StealthInk by $(b)$; BiMark by $(d)$, where $d$ is the number of probability reweighting steps; and our method by $\abbr^L(b_1,\ldots,b_L)$ with $b_\ell$ bits of payload allocated at level $\ell$.}
\label{tab:C4:Llama2_7B:temp1_0_cw4:n200:bacc_ppl}
\begin{tabular}{l|c|cc|cc|cc|cc}
\toprule
Watermark & S.D. & \multicolumn{2}{c|}{12 Bits} & \multicolumn{2}{c|}{24 Bits} & \multicolumn{2}{c|}{36 Bits} & \multicolumn{2}{c}{48 Bits} \\
\cmidrule(lr){3-4}\cmidrule(lr){5-6}\cmidrule(lr){7-8}\cmidrule(lr){9-10}
  &   & B.Acc.$\uparrow$ & PPL & B.Acc.$\uparrow$ & PPL & B.Acc.$\uparrow$ & PPL & B.Acc.$\uparrow$ & PPL \\
\midrule
w/o & --- & --- & 4.31 & --- & 4.31 & --- & 4.31 & --- & 4.31 \\
\midrule
MPAC (1, 2.0) & --- & 0.954 & 4.69 & 0.876 & 4.71 & 0.840 & 4.69 & 0.800 & 4.64 \\
MPAC (2, 2.0) & --- & 0.959 & 4.78 & 0.882 & 4.76 & 0.835 & 4.75 & 0.802 & 4.75 \\
StealthInk (1) & --- & 0.934 & 4.36 & 0.845 & 4.33 & 0.805 & 4.32 & 0.769 & 4.32 \\
StealthInk (2) & --- & 0.843 & 4.27 & 0.748 & 4.31 & 0.731 & 4.33 & 0.687 & 4.37 \\
BiMark (5) & --- & 0.866 & 4.31 & 0.767 & 4.28 & 0.731 & 4.26 & 0.702 & 4.33 \\
BiMark (10) & --- & 0.914 & 4.25 & 0.827 & 4.32 & 0.788 & 4.30 & 0.746 & 4.31 \\
BiMark (20) & --- & 0.961 & 4.21 & 0.894 & 4.32 & 0.852 & 4.29 & 0.812 & 4.26 \\
HeRo$^{1}$ (2) & --- & 0.993 & 4.32 & 0.966 & 4.28 & 0.933 & 4.31 & 0.901 & 4.33 \\
HeRo$^{1}$ (3) & --- & \textbf{0.997} & 4.31 & \textbf{0.981} & 4.33 & \textbf{0.953} & 4.29 & \textbf{0.919} & 4.32 \\
\midrule
HeRo$^{2}$ (1,1) & \checkmark & 0.950 & 4.31 & 0.880 & 4.31 & 0.831 & 4.32 & 0.798 & 4.33 \\
HeRo$^{2}$ (2,1) & \checkmark & 0.970 & 4.30 & 0.904 & 4.28 & 0.856 & 4.27 & 0.823 & 4.32 \\
HeRo$^{2}$ (2,2) & \checkmark & \textbf{0.975} & 4.30 & \textbf{0.925} & 4.36 & \textbf{0.873} & 4.30 & \textbf{0.833} & 4.28 \\
\bottomrule
\end{tabular}
\end{table*}

We evaluate detection accuracy across watermarking methods. Our method is denoted by $\abbr$, with $\abbr^1(b_1)$ and $\abbr^2(b_1,b_2)$ representing single-level and two-level configurations, respectively.

Table~\ref{tab:C4:Llama2_7B:temp1_0_cw4:n200:bacc_ppl} reports bit-level decoding accuracy under a fixed generation budget of 200 tokens for total payload sizes $m \in \{12,24,36,48\}$. Across all payload sizes, our method consistently achieves stronger detectability than MPAC, StealthInk, and BiMark. In particular, the single-level configuration $\abbr^{1}$ achieves near-perfect decoding at moderate payload sizes (24 and 36 bits) and remains highly accurate even at 48 bits. The two-level configuration $\abbr^{2}$ exhibits a modest accuracy reduction relative to $\abbr^{1}$, while still matching or outperforming prior multi-bit baselines at the same total payload size. Importantly, $\abbr^{2}$ additionally supports selective disclosure (Section~\ref{sec:exp_selective}) while maintaining strong detectability.

Figure~\ref{fig:bitacc_vs_tokens} further shows that decoding accuracy improves with longer generations for all methods, and that our method achieves higher accuracy with fewer generated tokens.

\begin{figure}[ht]
\centering
\includegraphics[width=0.7\textwidth]{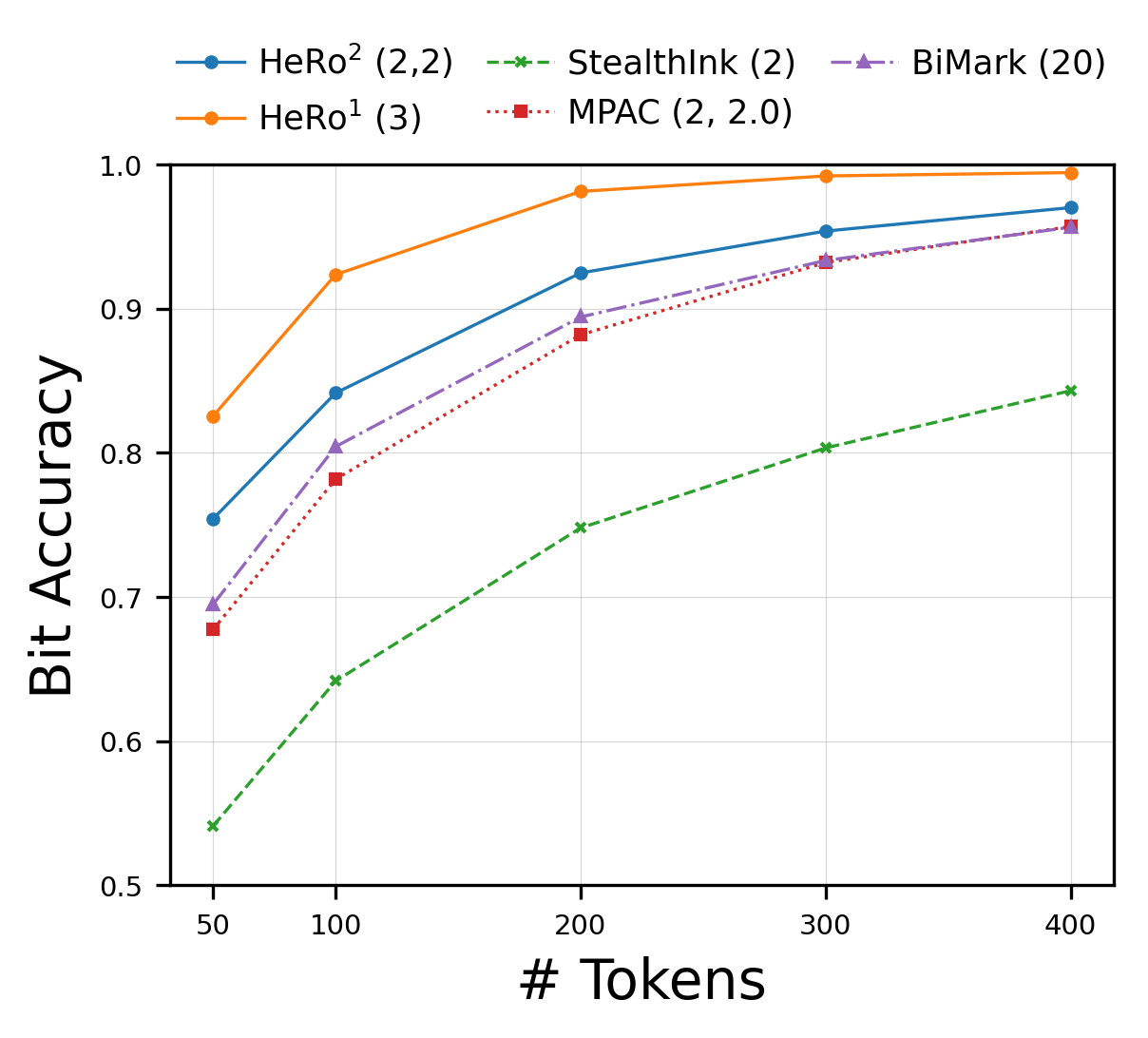}
\caption{Comparison of bit accuracy for multi-bit watermarking methods as a function of token budget when embedding a 24-bit payload.}
\label{fig:bitacc_vs_tokens}
\end{figure}

\paragraph{Text Quality Preservation.}
\begin{figure}[ht]
    \centering
    \includegraphics[width=0.7\textwidth]{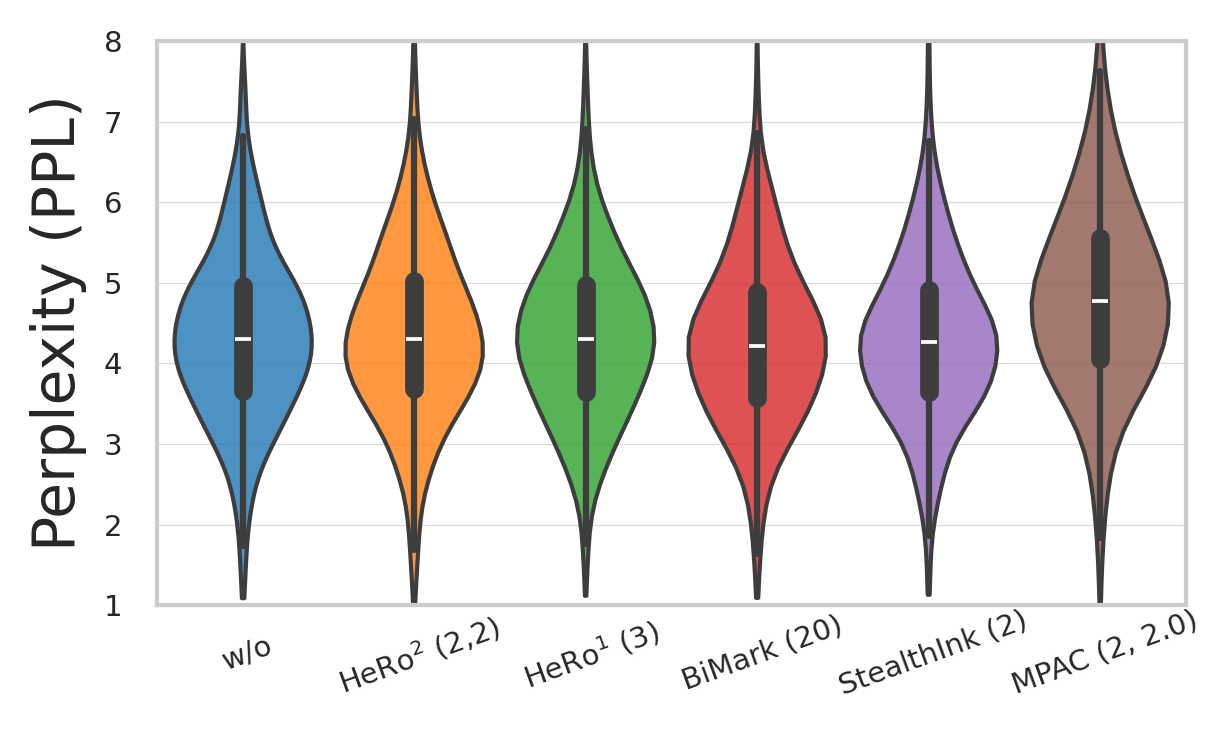}
    \caption{Violin plots of perplexity (PPL) for \texttt{Llama2-7B} generations on \texttt{C4}, comparing multi-bit watermarking methods with the unwatermarked baseline (w/o). Results are computed over 200 generated tokens.}
    \label{fig:ppl}
\end{figure}

Figure~\ref{fig:ppl} visualizes the PPL distributions. The distribution under our method closely overlaps with the unwatermarked baseline, whereas MPAC exhibits a noticeable shift toward higher perplexity values, implying a large quality degradation.

\subsection{Computational Efficiency}
\label{sec:exp_efficiency}
We measure wall-clock latency for both watermark embedding during generation and message decoding. We benchmark our implementation against publicly released codebases for MPAC and BiMark.

\begin{figure}[ht]
    \centering
    \includegraphics[width=0.7\textwidth]{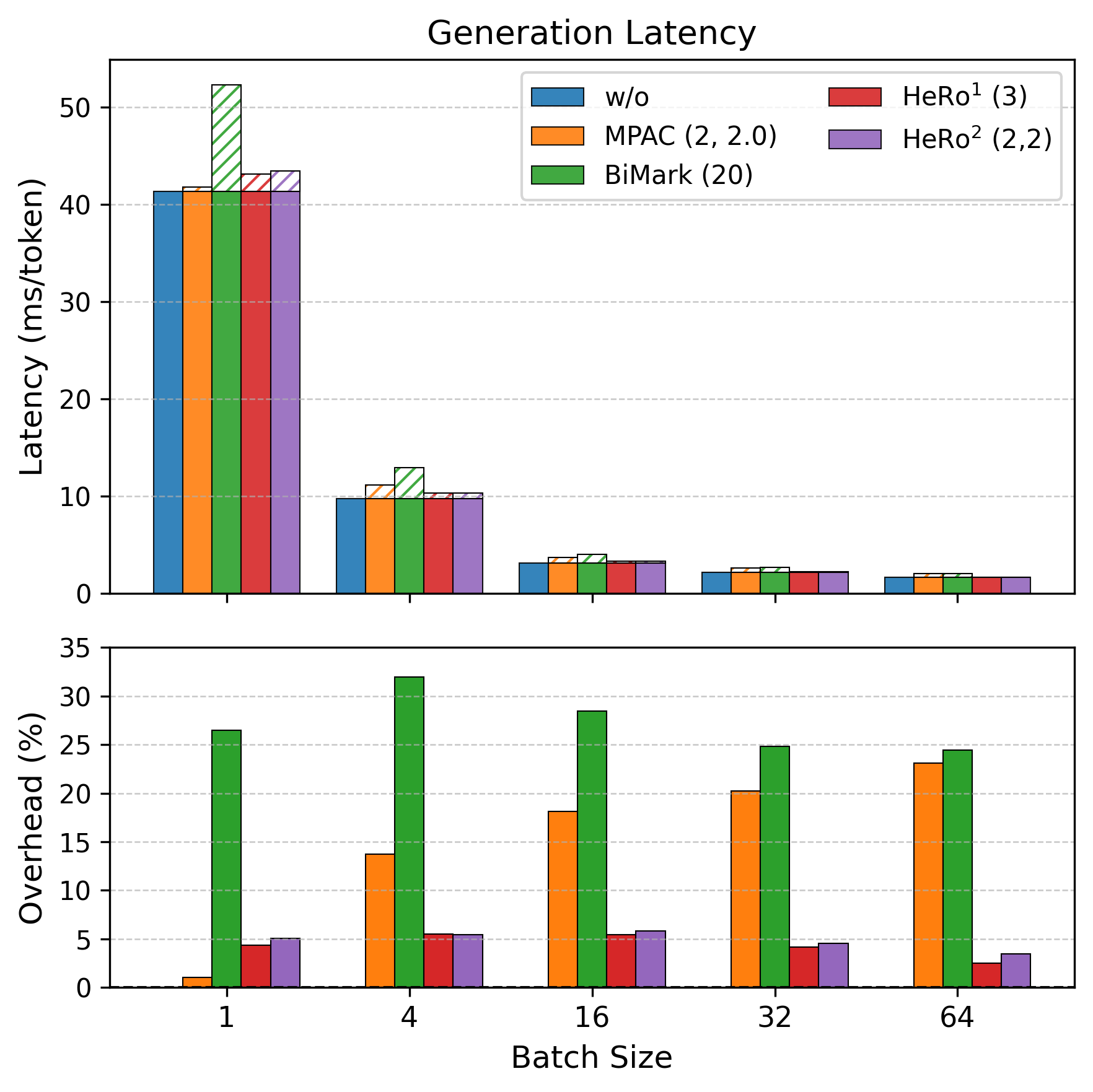}
    \caption{Generation latency under batched inference for \texttt{Llama2-7B} generations on \texttt{C4} with a 24-bit payload. 
    \textbf{Top:} per-token generation latency (ms/token). 
    \textbf{Bottom:} relative overhead over the non-watermarked baseline (w/o). 
    Batch size denotes the number of prompts processed concurrently.}
    \label{fig:latency:generation}
\end{figure}

\textbf{Generation Latency.} 
Real-world LLM services typically process multiple requests concurrently and rely on batching to improve GPU utilization and amortize overhead.
We generate 1{,}000 continuations with a 24-bit embedded payload, and vary the generation batch size in $\{1,4,16,32,64\}$.
For each method, we report per-token generation latency, computed as total elapsed generation time divided by the number of generated tokens.

Figure~\ref{fig:latency:generation} summarizes the results of generation latency. The top panel reports absolute per-token latency (ms/token), while the bottom panel shows relative overhead compared to the baseline. Our implementation naturally supports batched inference and incurs only a small and stable overhead (approximately 4--5\%) over the non-watermarked baseline across all batch sizes, with a mild decreasing trend as batch size increases.  In contrast, MPAC and BiMark exhibit substantial overhead across all batch sizes, and their overhead does not diminish with batching, reflecting limited batching support in the released implementations.

\textbf{Message Decoding Latency.}
Fast decoding is important for large-scale auditing settings, where messages must be recovered from large text corpora. During message decoding, the embedded payload $m$ is recovered from the generated text. We measure wall-clock decoding time and report average latency in $\mu\text{s}/\text{token}$ over approximately 400,000 tokens using \texttt{Llama2-7B} tokenizer. Table~\ref{tab:C4:Llama2_7B:temp1_0_cw4:decode_latency:msg24:n400:produced:us} shows decoding latency as a function of the number of texts processed per batch.

Our decoding procedure is compatible with standard batched execution and achieves substantially lower decoding latency than the baselines, with additional speedups as batch size increases. In contrast, the publicly released implementations of MPAC and BiMark perform decoding on a per-text basis and are CPU-based. We thus invoke their decoding procedures independently for each example in the batch, which does not yield meaningful speedups as batch size increases.

\begin{table}[!ht]
\centering
\setlength{\tabcolsep}{4pt}
\renewcommand{\arraystretch}{1.05}
\caption{Decoding latency ($\mu\mathrm{s}/\mathrm{token}$) using \texttt{Llama2-7B} tokenizer. Values are averaged over approximately 400,000 tokens.}
\label{tab:C4:Llama2_7B:temp1_0_cw4:decode_latency:msg24:n400:produced:us}
\begin{tabular}{l|c|c|c|c|c}
\toprule
Watermark & \multicolumn{5}{c}{Batch Size} \\
\cmidrule(lr){2-6}
  & 1 & 4 & 16 & 32 & 64 \\
\midrule
MPAC (2, 2.0) & 488.13 & 479.30 & 491.43 & 489.15 & 480.43 \\
BiMark (20) & 130.43 & 130.22 & 130.68 & 131.63 & 133.23 \\
\midrule
HeRo$^{1}$ (3) & \textbf{7.75} & \textbf{2.43} & \textbf{1.66} & \textbf{1.57} & \textbf{1.44} \\
\midrule
HeRo$^{2}$ (2,2) & \textbf{10.34} & \textbf{3.08} & \textbf{1.85} & \textbf{1.59} & \textbf{1.48} \\
\bottomrule
\end{tabular}
\end{table}

\subsection{Robustness Analysis}
\label{sec:exp_robustness}

\begin{table*}[!t]
\centering
\setlength{\tabcolsep}{4pt}
\renewcommand{\arraystretch}{1.1}
\caption{Bit accuracy (B.Acc.$\uparrow$) under attacks on \texttt{C4} using \texttt{Llama2-7B} with a 24-bit payload and 200 generated tokens.
REP: random replacement; INS: random insertion; DEL: random deletion; RT: roundtrip translation; DIPPER: paraphrasing attacks with controllable lexical and order diversity.}
\label{tab:C4:Llama2_7B:temp1_0_cw4:attack_bacc:msg24:n200}
\footnotesize
\begin{tabular}{l|ccc|ccc|ccc|c|ccc}
\toprule
\multirow{2}{*}{Watermark}
  & \multicolumn{3}{c|}{REP}
  & \multicolumn{3}{c|}{INS}
  & \multicolumn{3}{c|}{DEL}
  & \multirow{2}{*}{RT}
  & \multicolumn{3}{c}{DIPPER} \\
\cmidrule(lr){2-4}\cmidrule(lr){5-7}\cmidrule(lr){8-10}\cmidrule(lr){12-14}
 & 0.05 & 0.1 & 0.2
 & 0.05 & 0.1 & 0.2
 & 0.05 & 0.1 & 0.2
 &
 & (20,0) & (0,20) & (20,20) \\
\midrule
MPAC (2, 2.0)    & 0.809 & 0.739 & 0.622 & 0.814 & 0.754 & 0.648 & 0.827 & 0.766 & 0.663 & 0.650 & 0.669 & 0.794 & 0.646 \\
StealthInk (2)   & 0.677 & 0.620 & 0.552 & 0.684 & 0.630 & 0.559 & 0.690 & 0.643 & 0.575 & 0.569 & 0.582 & 0.665 & 0.573 \\
BiMark (20)      & 0.833 & 0.768 & 0.658 & 0.838 & 0.778 & 0.677 & 0.845 & 0.793 & 0.692 & 0.687 & 0.703 & 0.817 & 0.691 \\
\midrule
HeRo$^{1}$ (3)   & \textbf{0.949} & \textbf{0.898} & \textbf{0.752} & \textbf{0.952} & \textbf{0.906} & \textbf{0.782} & \textbf{0.958} & \textbf{0.920} & \textbf{0.798} & \textbf{0.780} & \textbf{0.800} & \textbf{0.931} & \textbf{0.778} \\
\midrule
HeRo$^{2}$ (2,2) & \textbf{0.871} & \textbf{0.800} & \textbf{0.677} & \textbf{0.868} & \textbf{0.809} & \textbf{0.707} & \textbf{0.879} & \textbf{0.825} & \textbf{0.710} & \textbf{0.707} & \textbf{0.723} & \textbf{0.848} & \textbf{0.701} \\
\bottomrule
\end{tabular}
\end{table*}

We evaluate robustness under a wide range of perturbation strategies. For random replacement (REP), random insertion (INS), and random deletion (DEL), a fraction $p \in \{0.05, 0.1, 0.2\}$ of tokens are randomly substituted, inserted, or deleted, respectively, where larger values correspond to more aggressive corruption. Roundtrip translation (RT) translates the watermarked text from English to French and back using OPUS-MT models~\citep{tiedemann2020opus}. We also evaluate paraphrasing attacks via DIPPER~\citep{krishna2023paraphrasing} under three configurations of lexical diversity and order diversity: $(20,0),\,(0,20),\,(20,20)$, where the two parameters control the degree of lexical substitution and sentence reordering, respectively.

Table~\ref{tab:C4:Llama2_7B:temp1_0_cw4:attack_bacc:msg24:n200} reports bit accuracy under perturbation strategies for a 24-bit payload. For random replacement (REP), insertion (INS), and deletion (DEL), decoding accuracy consistently decreases as the corruption rate $p$ increases. Roundtrip translation (RT) produces degradation comparable to the strongest random perturbation setting ($p=0.2$).
Under DIPPER paraphrasing, accuracy degrades as lexical and order diversity increases. Across all perturbation settings, our framework consistently achieves the highest decoding accuracy among the evaluated methods, indicating stronger preservation of recoverable watermark signals under text corruption.

\section{Discussion}

This work introduces a hierarchical watermarking framework for selective metadata disclosure in LLM-generated text. We provide theoretical guarantees for selective disclosure and evaluate the framework under both single-level and multi-level payload configurations. Our experiments demonstrate strong selective disclosure performance, robustness to common text perturbations, and favorable detectability-quality trade-offs. Moreover, under a fixed generation budget, we observe an inherent trade-off between selective disclosure and detectability. Formally characterizing this trade-off and identifying its fundamental limits remain open problems, which we leave for future work.

\section*{Acknowledgements}
We thank Weijie Su for helpful discussions and valuable feedback on this work.

\bibliographystyle{plainnat}
\bibliography{reference}

\newpage
\appendix

\setcounter{equation}{0}
\setcounter{figure}{0}
\setcounter{table}{0}
\setcounter{section}{0}
\setcounter{algorithm}{0}
\renewcommand{\thefigure}{\thesection.\arabic{figure}}
\renewcommand{\thetable}{\thesection.\arabic{table}}
\renewcommand{\thealgorithm}{\thesection.\arabic{algorithm}}
\renewcommand{\theHfigure}{appendix.\thesection.\arabic{figure}}
\renewcommand{\theHtable}{appendix.\thesection.\arabic{table}}
\renewcommand{\theHalgorithm}{appendix.\thesection.\arabic{algorithm}}
\makeatletter
\@addtoreset{figure}{section}
\@addtoreset{table}{section}
\@addtoreset{algorithm}{section}
\makeatother

\allowdisplaybreaks

\section{Proof of Theorem~\ref{thm:unbiasedness}}
\label{sec:proof_unbiasedness}

We prove that at any generation step $t$, the hierarchical routing sampler preserves the original next-token probability (NTP) distribution $P_t$ over the full vocabulary $\calV$.
The hierarchical sampler first routes through a sequence of nested partitions and then samples a token from the final selected chunk.

\paragraph{Hierarchical chunks.}
Fix a step $t$ and let $P_t(\cdot)=\Prob(\cdot \given x_{<t})$ be the model's NTP distribution on $\calV$, so $\sum_{x\in\calV} P_t(x)=1$.
The sampler begins with the full vocabulary $\Chunk^{(0)}=\calV$.
For each routing stage $\ell\in\{1,\ldots,L-1\}$, the current candidate set $\Chunk^{(\ell-1)}$ is partitioned into $K_\ell$ contiguous disjoint chunks by a deterministic rule:
\[
(\Chunk^{(\ell)}_{1},\ldots,\Chunk^{(\ell)}_{K_\ell})=\mathrm{Partition}(\Chunk^{(\ell-1)}),
\]
satisfying
\[
\bigcup_{i=1}^{K_\ell}\Chunk^{(\ell)}_i=\Chunk^{(\ell-1)}
\text{ and }
\Chunk^{(\ell)}_i\cap\Chunk^{(\ell)}_j=\emptyset\ (i\neq j).
\]
The aggregated probability mass of chunk $i$ at stage $\ell$ is
\[
w^{(\ell)}_i=\sum_{x\in\Chunk^{(\ell)}_i} P_t(x).
\]

\paragraph{Chunk selection.}
At routing stage $\ell$, the sampler treats the $K_\ell$ chunks 
$\Chunk^{(\ell)}_{1},\ldots,\Chunk^{(\ell)}_{K_\ell}$
as meta-tokens and draws an index
$s_{t,\ell}\in\{1,\ldots,K_\ell\}$ according to 
\[
\pr(s_{t,\ell}=i\mid \Chunk^{(\ell-1)})=\frac{w^{(\ell)}_i}{\sum_{x\in\Chunk^{(\ell-1)}} P_t(x)}
=\frac{\sum_{x\in\Chunk^{(\ell)}_i} P_t(x)}{\sum_{x\in\Chunk^{(\ell-1)}} P_t(x)}.
\]
The candidate set is then restricted to the selected chunk:
\[
\Chunk^{(\ell)}=\Chunk^{(\ell)}_{s_{t,\ell}}.
\]

\paragraph{Final token selection.}
After $L-1$ routing stages, the sampler holds a final chunk $\Chunk^{(L-1)}$ and draws the token $x_t$ from $P_t$ restricted to this set:
\[
\pr(x_t=v\mid \Chunk^{(L-1)})=\frac{P_t(v)}{\sum_{x\in\Chunk^{(L-1)}} P_t(x)}
\quad \text{for } v\in\Chunk^{(L-1)}.
\]

We now show the marginal distribution of the output satisfies $\pr(x_t=v)=P_t(v)$ for every token $v\in\calV$.

\begin{proof}[Proof of Theorem~\ref{thm:unbiasedness}]
Fix a token $v\in\calV$.
Because the partition rule is deterministic, $v$ determines a unique nested chunk path through the hierarchy.
For each $\ell\in\{1,\ldots,L-1\}$, let $i_\ell(v)\in\{1,\ldots,K_\ell\}$ be the unique index with
\[
v\in \Chunk^{(\ell)}_{i_\ell(v)}
\]
and write
\[
\Chunk^{(\ell)} := \Chunk^{(\ell)}_{i_\ell(v)}
\]
for the unique chunk containing $v$ at stage $\ell$. Define the routing event
\[
R_v := \bigcap_{\ell=1}^{L-1}\{s_{t,\ell}=i_\ell(v)\}.
\]
Sampling $v$ requires the sampler to follow this path, so $\{x_t=v\}\subseteq R_v$.
Therefore,
\[
\pr(x_t=v)
=
\pr(x_t=v\mid R_v)\pr(R_v).
\]

\paragraph{Routing factor.} At stage $\ell$, the chunk containing $v$ has aggregated mass 
\[
w^{(\ell)}_{i_\ell(v)}=\sum_{x\in\Chunk^{(\ell)}} P_t(x).
\]
Thus
\begin{align*}
\pr(R_v)
&=
\prod_{\ell=1}^{L-1}\pr\!\left(s_{t,\ell}=i_\ell(v)\,\middle|\,\bigcap_{j=1}^{\ell-1}\{s_{t,j}=i_j(v)\}\right) \\
&=
\prod_{\ell=1}^{L-1}\frac{w^{(\ell)}_{i_\ell(v)}}{\sum_{x\in\Chunk^{(\ell-1)}} P_t(x)} \\
&=
\prod_{\ell=1}^{L-1}
\frac{\sum_{x\in\Chunk^{(\ell)}} P_t(x)}{\sum_{x\in\Chunk^{(\ell-1)}} P_t(x)} \\
&=
\sum_{x\in\Chunk^{(L-1)}} P_t(x),
\end{align*}
where the last equality follows from telescoping together with $\Chunk^{(0)}=\calV$ and $\sum_{x\in\calV}P_t(x)=1$.

\paragraph{Final-stage factor.} Conditioned on $R_v$, the final candidate set is $\Chunk^{(L-1)}$, so
\[
\pr(x_t=v\mid R_v)
=\pr(x_t=v\mid \Chunk^{(L-1)})
=
\frac{P_t(v)}{\sum_{x\in\Chunk^{(L-1)}} P_t(x)}.
\]

Multiplying the two factors, we obtain
\[
\pr(x_t=v)
=
P_t(v).
\]
Because $v$ is arbitrary, we prove the final marginal of $x_t$ matches the original NTP distribution.

\end{proof}

\section{Proof of Theorem~\ref{thm:selective_disclosure}}
\label{app:proof_selective_disclosure}

Fix a level $\ell>k$, a verifier without access to $\boldsymbol{\keys}^{(\ell)}$ can only observe the generated sequence $x$
and the authorized key tables $\boldsymbol{\keys}^{(1)},\dots,\boldsymbol{\keys}^{(k)}$.
The verifier has no statistical way to distinguish which candidate is used at generation time, as formalized by the following assumption.
\begin{assumption}
\label{ass:exchangeable_null}
Conditioned on the observed sequence $x$ and the authorized tables
$\boldsymbol{\keys}^{(1)},\dots,\boldsymbol{\keys}^{(k)}$, the random vector
$\big(T^{(\ell)}(0),\dots,T^{(\ell)}(2^{b_\ell}-1)\big)$ is exchangeable with respect to
permutations of the candidate index set $\{0,\dots,2^{b_\ell}-1\}$.
Equivalently, for any permutation $\pi$ of $\{0,\dots,2^{b_\ell}-1\}$,
\[
\big(T^{(\ell)}(0),\dots,T^{(\ell)}(2^{b_\ell}-1)\big)
\ \overset{d}{=}\ 
\big(T^{(\ell)}(\pi(0)),\dots,T^{(\ell)}(\pi(2^{b_\ell}-1))\big).
\]
\end{assumption}

We now prove that exchangeability implies chance-level decoding at the bit level.
For a $b_\ell$-bit message $a\in\{0,\dots,2^{b_\ell}-1\}$, we let $\bit_r(a)$ denote the $r$-th bit of $a$. Let $\hat m^{(\ell)}$ be any decoder output based on $(x,\boldsymbol{\keys}^{(1)},\dots,\boldsymbol{\keys}^{(k)})$,
and $\hat b_r = \bit_r(\hat m^{(\ell)})$.

\begin{lemma}
\label{lem:exchangeable_bit}
Assume $m^{(\ell)}$ is uniform on $\{0,\dots,2^{b_\ell}-1\}$.
Under Assumption~\ref{ass:exchangeable_null}, for any decoder,
\[
\pr\big[\hat b_r = \bit_r(m^{(\ell)})\big] = \tfrac{1}{2},
\quad \forall r\in\{1,\dots,b_\ell\}.
\]
\end{lemma}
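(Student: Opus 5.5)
The plan is to reduce the lemma to one claim: the verifier's observation carries no information about $m^{(\ell)}$. Write $O:=(x,\boldsymbol{\keys}^{(1)},\dots,\boldsymbol{\keys}^{(k)})$ for everything the verifier sees. Any decoder, possibly randomized with internal randomness $\omega$ independent of everything else, outputs $\hat m^{(\ell)}=g(O,\omega)$. If I can show that the conditional law of $m^{(\ell)}$ given $O$ is uniform on $\{0,\dots,2^{b_\ell}-1\}$, the lemma follows from a counting argument over bit values.

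\emph{Step 1 (uniform posterior).} I will use Assumption~\ref{ass:exchangeable_null} as follows. Only one coordinate of $\big(T^{(\ell)}(0),\dots,T^{(\ell)}(2^{b_\ell}-1)\big)$ is tied to the generation-time key, namely the coordinate indexed by $m^{(\ell)}$. Consider a permutation $\pi$ of the candidate set, and relabel the level-$\ell$ key table by $\pi$. Generating with true message $a$ and the original labels then has the same law as generating with true message $\pi(a)$ and the permuted labels. Exchangeability says this relabeling does not change the law of what the verifier can compute from $O$. Hence the law of $O$ given $m^{(\ell)}=a$ equals the law of $O$ given $m^{(\ell)}=\pi(a)$. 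Since the permutations act transitively on the candidate set, these conditional laws coincide for all $a$, so $O$ is independent of $m^{(\ell)}$. With the uniform prior, Bayes' rule gives $\pr(m^{(\ell)}=a\mid O)=2^{-b_\ell}$ for every $a$.

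\emph{Step 2 (bit-level chance).} Fix $r$. By the tower property and the independence of $\omega$,
\[
\pr\big[\hat b_r=\bit_r(m^{(\ell)})\big]
=\mathbb{E}\Big[\sum_{a}\pr(m^{(\ell)}=a\mid O)\,\mathbf{1}\{\bit_r(g(O,\omega))=\bit_r(a)\}\Big]
=\mathbb{E}\Big[2^{-b_\ell}\,\#\{a:\bit_r(a)=\hat b_r\}\Big].
\]
For either value of $\hat b_r$, exactly $2^{b_\ell-1}$ of the $2^{b_\ell}$ messages have $r$-th bit equal to it. The integrand is therefore identically $1/2$, which gives the claim for every $r$ and every decoder.

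\emph{Main obstacle.} The hard part is Step~1: turning an exchangeability statement about the score vector $T^{(\ell)}$ into independence of the entire view $O$ from $m^{(\ell)}$. The assumption is phrased for $T^{(\ell)}$, but the decoder may use $O$ arbitrarily, not only through $T^{(\ell)}$. I would handle this by making explicit the reading used above: the assumption is invariance of the joint law of the verifier's view under relabeling the candidate index together with the true message. In particular, the level-$\ell$ keys are unknown to the verifier and independent of the authorized tables, so nothing in $O$ singles out any candidate index. If a weaker reading is forced, meaning exchangeability only conditional on $O$, I would first restrict attention to decoders that are functions of $T^{(\ell)}$. For such decoders, exchangeability gives $\pr(\hat m^{(\ell)}=a\mid m^{(\ell)}=a')$ invariant under simultaneous permutation of $(a,a')$. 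The same averaging over the uniform prior then again gives $1/2$ for each bit.
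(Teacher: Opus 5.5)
Your main line (Steps 1 and 2) is sound, but it takes a different route from the paper. The paper works only with the score vector. It writes the decoder as $\hat m^{(\ell)}=\phi(T^{(\ell)}(0),\dots,T^{(\ell)}(2^{b_\ell}-1))$ and applies Assumption~\ref{ass:exchangeable_null} with the bit-flip permutation $\pi_r$ to conclude that $\hat b_r$ is a fair coin. It then combines this with the uniformity of $\bit_r(m^{(\ell)})$.

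You instead show that the verifier's whole view $O$ is independent of $m^{(\ell)}$, and then count bits under the uniform posterior. This buys three things:
\begin{itemize}
\item It covers arbitrary, even randomized, decoders acting on what the verifier can actually compute. The scores $T^{(\ell)}(a)$ cannot be formed without $\boldsymbol{\keys}^{(\ell)}$.
\item It does not need $\hat b_r$ to be uniform; a constant decoder is handled.
\item It supplies the independence that the paper's last step uses silently: two fair bits need not agree with probability $1/2$. In the actual scheme, argmax applied to $T^{(\ell)}$ gives a fair $\hat b_r$ and is compatible with $T^{(\ell)}$ being exchangeable given $O$. Yet it beats chance, because $T^{(\ell)}(m^{(\ell)})$ is the stochastically largest coordinate.
\end{itemize}

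The price is that Step 1 is not derived from Assumption~\ref{ass:exchangeable_null}; it replaces it. Note also that relabeling the key table \emph{together with} the true message leaves $O$ unchanged pointwise, so that invariance is vacuous. The real content of Step 1 is that the relabeled table has the same law as the original, jointly with everything else. State this hypothesis precisely:
\begin{itemize}
\item the entries of $\boldsymbol{\keys}^{(\ell)}$ are exchangeable and independent of the authorized tables, the prompt and the sampling randomness;
\item $m^{(\ell)}$ is independent of the other payload components.
\end{itemize}
The second condition is easy to miss but necessary. Your relabeling step implicitly holds the other components fixed. If, say, $m^{(\ell)}$ were determined by $m^{(1)}$, the law of $O$ given $m^{(\ell)}=a$ would depend on $a$, and a public verifier would learn $m^{(\ell)}$.

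Drop the fallback in your last paragraph; it does not work. Invariance of $\pr(\hat m^{(\ell)}=a\mid m^{(\ell)}=a')$ under simultaneous permutations of $(a,a')$ only forces this probability to equal some $p$ on the diagonal and some $q$ off it. The resulting bit accuracy is $p+(2^{b_\ell-1}-1)q=1-2^{b_\ell-1}q$, which equals $1/2$ only when $q=2^{-b_\ell}$. An authorized argmax decoder satisfies this invariance and has accuracy near $1$. Nor does exchangeability of $T^{(\ell)}$ deliver that invariance for non-equivariant decoders, such as a constant one. Chance-level accuracy requires that $\pr(\hat m^{(\ell)}=a\mid m^{(\ell)}=a')$ not depend on $a'$, which is exactly the independence your main line establishes. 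By the argmax example above, restricting to decoders that are functions of $T^{(\ell)}$ cannot rescue the literal reading of the assumption.
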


\begin{proof}
Fix $r\in\{1,\dots,b_\ell\}$. Let $S_0=\{a:\bit_r(a)=0\}$ and $S_1=\{a:\bit_r(a)=1\}$, we have $|S_0|=|S_1|=2^{b_\ell-1}$.
Because $m^{(\ell)}$ is uniform over $\{0,\dots,2^{b_\ell}-1\}$, $\bit_r(m^{(\ell)})$ is uniform on $\{0,1\}$. Consider the permutation $\pi_r(a)$ that flips the $r$-th bit of $a$, i.e., $\bit_r(\pi_r(a)) = 1-\bit_r(a)$ for all $a$.
By Assumption~\ref{ass:exchangeable_null}, we have
\[
\big(T^{(\ell)}(0),\dots,T^{(\ell)}(2^{b_\ell}-1)\big)
\ \overset{d}{=}\ 
\big(T^{(\ell)}(\pi_r(0)),\dots,T^{(\ell)}(\pi_r(2^{b_\ell}-1))\big).
\]
Applying any decoder rule $\phi(\cdot)$ to the above expression, it follows that
$\hat m^{(\ell)}=\phi(T^{(\ell)}(0),\dots,T^{(\ell)}(2^{b_\ell}-1))$ has the same distribution as
$\phi(T^{(\ell)}(\pi_r(0)),\dots,T^{(\ell)}(\pi_r(2^{b_\ell}-1)))$.
Thus $\bit_r(\hat m^{(\ell)})$ is equal in distribution to $1-\bit_r(\hat m^{(\ell)})$, we have $\pr[\hat b_r=0]=\pr[\hat b_r=1]=1/2$. Because $\bit_r(m^{(\ell)})$ is uniform, we prove
\[
\pr[\hat b_r=\bit_r(m^{(\ell)})]=\tfrac{1}{2}.
\]
\end{proof}

\begin{proof}[Proof of Theorem~\ref{thm:selective_disclosure}]
For any $\ell>k$, by applying Lemma~\ref{lem:exchangeable_bit} to each bit position
$r=1,\dots,b_\ell$, we prove the chance-level decoding accuracy of unauthorized verifiers.
\end{proof}

\section{Detailed Algorithms}
\label{app:algorithm_detailed}

We present the algorithms instantiated with Gumbel-Max sampling.
Algorithm~\ref{alg:generation_detailed} expands Algorithm~\ref{alg:generation} to show the full generation pipeline, and Algorithm~\ref{alg:decoding_detailed} expands Algorithm~\ref{alg:decoding} to show the full decoding pipeline.

\begin{algorithm}[t]
\caption{HeRo Generation Instantiated with Gumbel-Max Sampling}
\label{alg:generation_detailed}
\begin{algorithmic}[1]
\REQUIRE Language model $\calM$, initial prefix $x_{\le h}$, generation length $T$, context window size $h$, PRF $\calA$,
         key tables $\{\boldsymbol{\keys}^{(\ell)}\}_{\ell=1}^{L}$,
         chunking schedule $(K_1,\ldots,K_{L-1})$,
         full payload $m=(m_1,\ldots,m_K)$,
         position key $\keys_{\text{pos}}$
\ENSURE Generated token sequence $x=(x_{h+1},\ldots,x_T)$
\STATE Initialize context history $\calH \leftarrow \emptyset$
\FOR{$t = h+1$ \textbf{to} $T$}
    \STATE Query $\calM$ on the current prefix to obtain $P_t(\cdot)=\Prob(\cdot \given x_{<t})$
    \IF{$x_{(t-h):(t-1)} \in \calH$}
        \STATE Sample $x_t \sim P_t$
        \STATE Append $x_t$ to the running prefix
        \STATE \textbf{continue}
    \ELSE
        \STATE $\calH \leftarrow \calH \cup \{x_{(t-h):(t-1)}\}$
    \ENDIF
    \STATE $\seed_t^{(\mathrm{pos})} \leftarrow \calA(x_{(t-h):(t-1)},\,\keys_{\text{pos}})$
    \STATE $k_t \leftarrow \mathrm{randint}(\seed_t^{(\mathrm{pos})}, K)$ \COMMENT{position allocation}
    \STATE $(m_{k_t}^{(1)},\ldots,m_{k_t}^{(L)}) \leftarrow m_{k_t}$
    \STATE Set offset $o \leftarrow 0$, current chunk size $v \leftarrow V$
    \FOR{$\ell = 1$ \textbf{to} $L-1$}
        \STATE Partition $[o,\,o+v)$ into $K_\ell$ contiguous chunks $\Chunk_1^{(\ell)},\ldots,\Chunk_{K_\ell}^{(\ell)}$
        \STATE $w_i^{(\ell)} \leftarrow \sum_{x\in\Chunk_i^{(\ell)}} P_t(x)$ for $i=1,\ldots,K_\ell$
        \STATE $\keys \leftarrow \boldsymbol{\keys}^{(\ell)}(m_{k_t}^{(\ell)})$
        \STATE $\seed_t^{(\ell)} \leftarrow \calA(x_{(t-h):(t-1)},\,\keys)$
        \STATE Use $\seed_t^{(\ell)}$ as the PRNG seed to generate $(U_{t,i}^{(\ell)})_{i=1}^{K_\ell}$ with $U_{t,i}^{(\ell)} \overset{\mathrm{i.i.d.}}{\sim} \mathrm{Uniform}(0,1)$
        \STATE $s_{t,\ell} \leftarrow \arg\max_{i}\;\log w_i^{(\ell)} - \log(-\log U_{t,i}^{(\ell)})$
        \STATE Update $o,v$ to the range of $\Chunk_{s_{t,\ell}}^{(\ell)}$
    \ENDFOR
    \STATE $\keys \leftarrow \boldsymbol{\keys}^{(L)}(m_{k_t}^{(L)})$
    \STATE $\seed_t^{(L)} \leftarrow \calA(x_{(t-h):(t-1)},\,\keys)$
    \STATE Use $\seed_t^{(L)}$ as the PRNG seed to generate $(U_{t,j}^{(L)})_{j=0}^{v-1}$ with $U_{t,j}^{(L)} \overset{\mathrm{i.i.d.}}{\sim} \mathrm{Uniform}(0,1)$
    \STATE $s_{t,L} \leftarrow \arg\max_{j=0,\ldots,v-1}\;\log P_t(o+j) - \log(-\log U_{t,j}^{(L)})$
    \STATE $x_t \leftarrow o+s_{t,L}$
    \STATE Append $x_t$ to the running prefix
\ENDFOR
\STATE \textbf{return} $x=(x_{h+1},\ldots,x_T)$
\end{algorithmic}
\end{algorithm}

\begin{algorithm}[t]
\caption{HeRo Decoding Instantiated with Gumbel-Max Sampling}
\label{alg:decoding_detailed}
\begin{algorithmic}[1]
\REQUIRE Batch of generated sequences $x^{(1)},\ldots,x^{(B)}$, vocabulary size $V$, context window size $h$, PRF $\calA$,
         key tables $\{\boldsymbol{\keys}^{(\ell)}\}_{\ell=1}^{L}$,
         chunking schedule $(K_1,\ldots,K_{L-1})$,
         bits per level $(b_1,\ldots,b_L)$,
         position key $\keys_{\text{pos}}$, number of segments $K$
\ENSURE Decoded payloads $\hat{m}^{(1)},\ldots,\hat{m}^{(B)}$
\STATE Convert the batch into decoding pairs $\{(x_{(t-h):(t-1)},\,x_t)\}_{t=h+1}^{T}$, where each pair contains one context window and the corresponding next token
\STATE \COMMENT{Stage 1: compute token-level evidence}
\STATE Reconstruct position allocation $\{k_t\}_{t=h+1}^{T}$ in parallel from $\{x_{(t-h):(t-1)}\}_{t=h+1}^{T}$ using $\keys_{\text{pos}}$
\FOR{$\ell = 1$ \textbf{to} $L-1$}
    \STATE Compute $\{s_{t,\ell}\}_{t=h+1}^{T}$ in parallel by locating each token $x_t$ in the level-$\ell$ chunk partition
\ENDFOR
\STATE Compute $\{s_{t,L}\}_{t=h+1}^{T}$ in parallel from the final selected chunk
\FOR{$\ell = 1$ \textbf{to} $L$}
    \FOR{$a = 0$ \textbf{to} $2^{b_\ell}-1$}
        \STATE $\keys \leftarrow \boldsymbol{\keys}^{(\ell)}(a)$
        \STATE Compute $\{U_t^{(\ell)}(a)\}_{t=h+1}^{T}$ in parallel, where
        \[
        U_t^{(\ell)}(a) \leftarrow \bigl[\calA(x_{(t-h):(t-1)},\,\keys)\bigr]_{s_{t,\ell}}
        \]
        \STATE Compute $\{E_t^{(\ell)}(a)\}_{t=h+1}^{T}$ in parallel, where
        \[
        E_t^{(\ell)}(a) \leftarrow -\log(1-U_t^{(\ell)}(a))
        \]
    \ENDFOR
\ENDFOR
\STATE \COMMENT{Stage 2: aggregate evidence and decode the message}
\FOR{$\ell = 1$ \textbf{to} $L$}
    \FOR{$a = 0$ \textbf{to} $2^{b_\ell}-1$}
        \STATE Compute $\mathrm{Score}^{(\ell)}_{i,k}(a)$ by summing $E_t^{(\ell)}(a)$ over all pairs from sequence $x^{(i)}$ with $k_t=k$, for all $i=1,\ldots,B$ and $k=1,\ldots,K$ in parallel
    \ENDFOR
    \STATE Compute $\hat{m}_{i,k}^{(\ell)} \leftarrow \arg\max_{a}\;\mathrm{Score}^{(\ell)}_{i,k}(a)$ for all $i=1,\ldots,B$ and $k=1,\ldots,K$ in parallel
\ENDFOR
\STATE Set $\hat{m}^{(i)} \leftarrow \bigl((\hat{m}_{i,1}^{(1)},\ldots,\hat{m}_{i,1}^{(L)}),\ldots,(\hat{m}_{i,K}^{(1)},\ldots,\hat{m}_{i,K}^{(L)})\bigr)$ for each $i=1,\ldots,B$
\STATE \textbf{return} $\hat{m}^{(1)},\ldots,\hat{m}^{(B)}$
\end{algorithmic}
\end{algorithm}

\section{Experiment Details}
All experiments are run on the NVIDIA A40 GPU with 48 GB memory.
For model generation configuration, we set the temperature to $1.0$, while keeping other decoding hyperparameters (e.g., top-$p$ and top-$k$) as the model defaults. 
We use a context window of size $h=4$ to derive the watermark seed from the recent $h$ tokens and a secret key.

The seed used to generate pseudorandom variables is determined by the key and the current $h$-gram context. Repetition is commonly observed in neural text generation~\citep{fu2021theoretical,xu2022learning}. If the model revisits an $h$-gram that has appeared earlier in the same generation, watermarking may reuse the same seed and produce repetition loops. To mitigate this issue, we apply a context masking strategy during generation, i.e., we skip watermarking at the positions where the current $h$-gram has occurred previously. 
This avoids embedding at positions that would introduce redundant and correlated signals, which may hurt rather than help detection. Context masking therefore removes low-quality embedding positions rather than reducing useful signals. 
Empirically, detectability remains high under this strategy, consistent with similar approaches adopted in prior watermarking work~\citep{hu2023unbiased,jiang2025stealthink}.

\paragraph{Row-Wise Seeded Uniform RNG.}
Watermarking requires generating pseudorandom variables with context-dependent seeds, which vary across sequences and time steps in a batch. To support efficient batched inference, we implement a CUDA kernel that generates uniform random variates with row-wise seeds using cuRAND's Philox state. For efficient decoding, we further provide an indexed variant that returns a single uniform variate per row at a specified vocabulary index, avoiding generating the full vocabulary-length vector when only one entry is needed.

\subsection{Model Size vs. Vocabulary Size}
\label{app:model_size_vs_vocab_size}
\begin{table}[htbp]
\centering
\setlength{\tabcolsep}{4pt}
\begin{tabular}{l|l|c|r|r}
\toprule
Family & Model & Release date & Params & Vocab size \\
\midrule
OPT & OPT-125M & 2022-05-03 & 125M & 50,272 \\
OPT & OPT-350M & 2022-05-03 & 350M & 50,272 \\
OPT & OPT-1.3B & 2022-05-03 & 1.3B & 50,272 \\
OPT & OPT-2.7B & 2022-05-03 & 2.7B & 50,272 \\
OPT & OPT-6.7B & 2022-05-03 & 6.7B & 50,272 \\
OPT & OPT-13B  & 2022-05-03 & 13B  & 50,272 \\
OPT & OPT-30B  & 2022-05-03 & 30B  & 50,272 \\
OPT & OPT-66B  & 2022-05-03 & 66B  & 50,272 \\
OPT & OPT-175B & 2022-05-03 & 175B & 50,272 \\
\midrule
Llama 2 & Llama-2-7B  & 2023-07-18 & 7B   & 32,000 \\
Llama 2 & Llama-2-13B & 2023-07-18 & 13B  & 32,000 \\
Llama 2 & Llama-2-70B & 2023-07-18 & 70B  & 32,000 \\
\midrule
Qwen2.5 & Qwen2.5-0.5B & 2024-09-19 & 0.5B & 151,936 \\
Qwen2.5 & Qwen2.5-1.5B & 2024-09-19 & 1.5B & 151,936 \\
Qwen2.5 & Qwen2.5-3B   & 2024-09-19 & 3B   & 151,936 \\
Qwen2.5 & Qwen2.5-7B   & 2024-09-19 & 7B   & 152,064 \\
Qwen2.5 & Qwen2.5-14B  & 2024-09-19 & 14B  & 152,064 \\
Qwen2.5 & Qwen2.5-32B  & 2024-09-19 & 32B  & 152,064 \\
Qwen2.5 & Qwen2.5-72B  & 2024-09-19 & 72B  & 152,064 \\
\midrule
gpt-oss & gpt-oss-20b  & 2025-08-05 & 21B  & 201,088 \\
gpt-oss & gpt-oss-120b & 2025-08-05 & 117B & 201,088 \\
\bottomrule
\end{tabular}
\caption{
    Representative LLM families at different parameter scales, with tokenizer vocabulary sizes and release dates.
    Parameter counts grow by orders of magnitude, while vocabulary size remains nearly constant within each family.
}
\vspace{0.5em}
\label{tab:model_size_vs_vocab_size}
\end{table}

\FloatBarrier

Table~\ref{tab:model_size_vs_vocab_size} reports parameter counts, model vocabulary sizes, and release dates for representative LLM families: OPT~\citep{zhang2022opt}, Llama~2~\citep{touvron2023llama}, Qwen2.5~\citep{qwen2.5}, and gpt-oss~\citep{agarwal2025gpt}. 
Within each model family, the vocabulary size stays nearly constant across model scales, while the number of parameters grows by orders of magnitude.

\subsection{Batching Strategy}
\label{app:batching_strategy}
In our computational efficiency experiments (Section~\ref{sec:exp_efficiency}), we measure latency under static batching, where all sequences in a batch are padded to the same length and processed synchronously. We choose this setting to ensure a fair comparison against open-source baselines, which are typically implemented using standard deep learning frameworks without specialized inference optimizations.
We acknowledge that modern LLM serving systems often employ continuous batching (e.g., vLLM~\citep{kwon2023efficient} and Orca~\citep{yu2022orca}) to reduce padding overhead and improve throughput. An optimized integration with continuous batching is not explored here and we leave it as future work.

\section{Full Experiment Results}
\subsection{Detectability}
\label{app:full_exp_results_detectability}

\begin{figure}[htbp]
    \centering
    \includegraphics[width=0.82\textwidth]{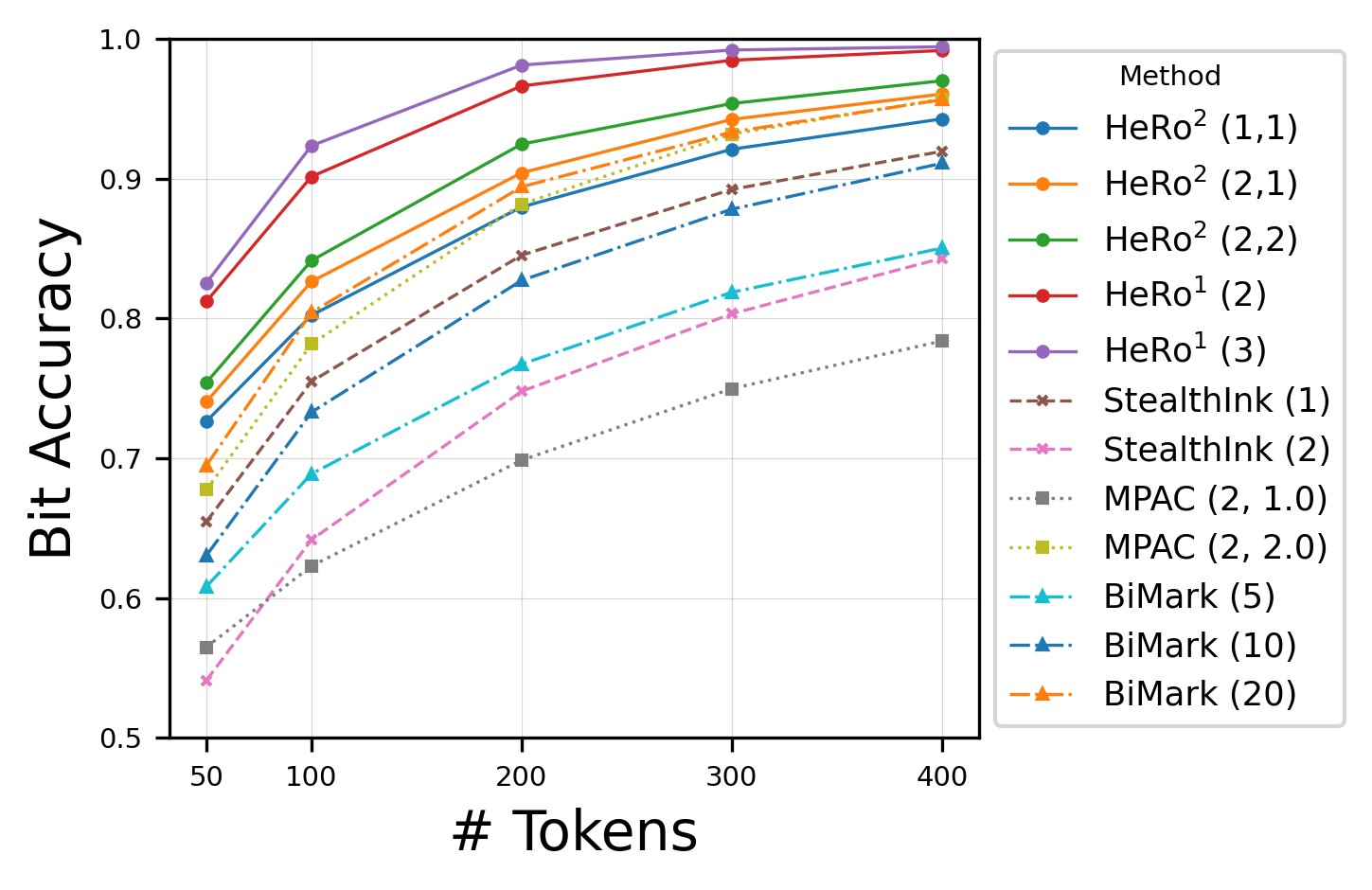}
    \caption{Comparison of bit accuracy for multi-bit watermarking methods as a function of token budget when embedding a 24-bit payload.}
    \label{app:fig:bitacc_vs_tokens}
\end{figure}
\FloatBarrier

\begin{table}[htbp]
\centering
\caption{TPR ($\uparrow$) at target false positive rates ($10^{-2}, 5\cdot 10^{-3}, 10^{-3}, 5\cdot 10^{-4}$) for \texttt{Llama2-7B} generations on \texttt{C4} using HeRo$^{2}$ (2,2) across payload sizes (12, 24, 36, and 48 bits). Results are computed over 200 generated tokens.}
\label{tab:C4:Llama2_7B:temp1.0_cw4:n200:tpr_hero2_2_2}
\begin{tabular}{l|c|c|c|c}
\toprule
FPR & 12 Bits & 24 Bits & 36 Bits & 48 Bits \\
\midrule
$10^{-2}$ & 0.997 & 0.995 & 0.997 & 0.995 \\
$5\cdot 10^{-3}$ & 0.996 & 0.995 & 0.997 & 0.993 \\
$10^{-3}$ & 0.989 & 0.988 & 0.985 & 0.972 \\
$5\cdot 10^{-4}$ & 0.982 & 0.981 & 0.984 & 0.961 \\
\bottomrule
\end{tabular}
\end{table}

\FloatBarrier

This section provides the full detectability results to complement the main text analysis in Section~\ref{sec:exp_detection}. 
Table~\ref{tab:C4:Llama2_7B:temp1.0_cw4:n200:tpr_hero2_2_2} reports the true positive rate (TPR) achieved by HeRo$^{2}$ (2,2) for \texttt{Llama2-7B} generations on \texttt{C4} across payload sizes (12, 24, 36, and 48 bits) at several target false positive rates (FPR). Our method achieves strong detection performance under stringent FPR constraints, which suggests reliable identification with 200 tokens. 

Tables~\ref{tab:C4:Llama2_7B:temp1_0_cw4:n400:bacc_ppl}--\ref{tab:OpenGen:Llama2_7B:temp1_0_cw4:n400:bacc_ppl} summarize bit-level decoding accuracy for \texttt{Llama2-7B} on both \texttt{C4} and \texttt{OpenGen} datasets, under generation lengths of 200 and 400 tokens and payload sizes of 12, 24, 36, and 48 bits. 
Across settings, the same trends observed in the main text persist: $\abbr^{1}$ attains strong (often near saturated) decoding accuracy at moderate payload sizes, while $\abbr^{2}$ exhibits a modest reduction relative to $\abbr^{1}$ yet remains competitive with or better than prior multi-bit baselines at the same total payload size.
Figure~\ref{app:fig:bitacc_vs_tokens} plots decoding accuracy as a function of the generation length for a 24-bit payload. In addition to the configurations shown in the main paper, we include a broader set of hyperparameter settings for a more comprehensive comparison.
\begin{table}[htbp]
\centering
\setlength{\tabcolsep}{5pt}
\renewcommand{\arraystretch}{1.05}
\caption{Comparison of bit accuracy (B.Acc.$\uparrow$) and median perplexity (PPL) for \texttt{Llama2-7B} generations on \texttt{C4} across payload sizes (12, 24, 36, and 48 bits). Results are computed over 400 generated tokens.}
\label{tab:C4:Llama2_7B:temp1_0_cw4:n400:bacc_ppl}
\begin{tabular}{l|c|cc|cc|cc|cc}
\toprule
Watermark & S.D. & \multicolumn{2}{c|}{12 Bits} & \multicolumn{2}{c|}{24 Bits} & \multicolumn{2}{c|}{36 Bits} & \multicolumn{2}{c}{48 Bits} \\
\cmidrule(lr){3-4}\cmidrule(lr){5-6}\cmidrule(lr){7-8}\cmidrule(lr){9-10}
  &   & B.Acc.$\uparrow$ & PPL & B.Acc.$\uparrow$ & PPL & B.Acc.$\uparrow$ & PPL & B.Acc.$\uparrow$ & PPL \\
\midrule
w/o & --- & --- & 4.16 & --- & 4.16 & --- & 4.16 & --- & 4.16 \\
\midrule
MPAC (1, 2.0) & --- & 0.987 & 4.55 & 0.950 & 4.57 & 0.917 & 4.56 & 0.882 & 4.49 \\
MPAC (2, 2.0) & --- & 0.987 & 4.65 & 0.957 & 4.65 & 0.920 & 4.62 & 0.886 & 4.59 \\
StealthInk (1) & --- & 0.976 & 4.13 & 0.920 & 4.13 & 0.882 & 4.10 & 0.845 & 4.12 \\
StealthInk (2) & --- & 0.925 & 4.05 & 0.843 & 4.12 & 0.817 & 4.15 & 0.767 & 4.18 \\
BiMark (5) & --- & 0.934 & 4.14 & 0.850 & 4.12 & 0.807 & 4.12 & 0.773 & 4.17 \\
BiMark (10) & --- & 0.970 & 4.08 & 0.911 & 4.13 & 0.866 & 4.10 & 0.829 & 4.10 \\
BiMark (20) & --- & 0.989 & 4.03 & 0.956 & 4.05 & 0.923 & 4.08 & 0.892 & 4.08 \\
HeRo$^{1}$ (2) & --- & 0.998 & 4.14 & 0.992 & 4.13 & 0.980 & 4.13 & 0.963 & 4.12 \\
HeRo$^{1}$ (3) & --- & \textbf{0.998} & 4.10 & \textbf{0.994} & 4.10 & \textbf{0.986} & 4.14 & \textbf{0.974} & 4.14 \\
\midrule
HeRo$^{2}$ (1,1) & \checkmark & 0.982 & 4.11 & 0.943 & 4.13 & 0.904 & 4.13 & 0.869 & 4.11 \\
HeRo$^{2}$ (2,1) & \checkmark & 0.989 & 4.13 & 0.960 & 4.14 & 0.931 & 4.13 & 0.900 & 4.14 \\
HeRo$^{2}$ (2,2) & \checkmark & \textbf{0.991} & 4.16 & \textbf{0.970} & 4.17 & \textbf{0.940} & 4.10 & \textbf{0.912} & 4.10 \\
\bottomrule
\end{tabular}
\end{table}

\FloatBarrier
\begin{table}[htbp]
\centering
\setlength{\tabcolsep}{5pt}
\renewcommand{\arraystretch}{1.05}
\caption{Comparison of bit accuracy (B.Acc.$\uparrow$) and median perplexity (PPL) for \texttt{Llama2-7B} generations on \texttt{OpenGen} across payload sizes (12, 24, 36, and 48 bits). Results are computed over 200 generated tokens.}
\label{tab:OpenGen:Llama2_7B:temp1_0_cw4:n200:bacc_ppl}
\begin{tabular}{l|c|cc|cc|cc|cc}
\toprule
Watermark & S.D. & \multicolumn{2}{c|}{12 Bits} & \multicolumn{2}{c|}{24 Bits} & \multicolumn{2}{c|}{36 Bits} & \multicolumn{2}{c}{48 Bits} \\
\cmidrule(lr){3-4}\cmidrule(lr){5-6}\cmidrule(lr){7-8}\cmidrule(lr){9-10}
  &   & B.Acc.$\uparrow$ & PPL & B.Acc.$\uparrow$ & PPL & B.Acc.$\uparrow$ & PPL & B.Acc.$\uparrow$ & PPL \\
\midrule
w/o & --- & --- & 3.96 & --- & 3.96 & --- & 3.96 & --- & 3.96 \\
\midrule
MPAC (1, 2.0) & --- & 0.944 & 4.40 & 0.861 & 4.40 & 0.823 & 4.42 & 0.781 & 4.31 \\
MPAC (2, 2.0) & --- & 0.948 & 4.43 & 0.859 & 4.37 & 0.819 & 4.39 & 0.782 & 4.39 \\
StealthInk (1) & --- & 0.912 & 3.86 & 0.820 & 3.91 & 0.787 & 3.94 & 0.747 & 3.94 \\
StealthInk (2) & --- & 0.812 & 3.84 & 0.728 & 3.95 & 0.716 & 3.97 & 0.671 & 4.00 \\
BiMark (5) & --- & 0.853 & 3.92 & 0.751 & 3.90 & 0.721 & 3.89 & 0.694 & 3.98 \\
BiMark (10) & --- & 0.898 & 3.91 & 0.800 & 3.90 & 0.767 & 3.89 & 0.733 & 3.85 \\
BiMark (20) & --- & 0.944 & 3.85 & 0.875 & 3.83 & 0.832 & 3.91 & 0.798 & 3.90 \\
HeRo$^{1}$ (2) & --- & 0.986 & 3.88 & 0.949 & 3.92 & 0.906 & 3.96 & 0.878 & 3.93 \\
HeRo$^{1}$ (3) & --- & \textbf{0.991} & 3.94 & \textbf{0.967} & 3.99 & \textbf{0.933} & 3.97 & \textbf{0.899} & 3.90 \\
\midrule
HeRo$^{2}$ (1,1) & \checkmark & 0.929 & 3.91 & 0.858 & 3.89 & 0.813 & 3.92 & 0.778 & 3.94 \\
HeRo$^{2}$ (2,1) & \checkmark & 0.952 & 3.96 & 0.887 & 3.92 & 0.834 & 3.88 & 0.799 & 3.89 \\
HeRo$^{2}$ (2,2) & \checkmark & \textbf{0.962} & 3.95 & \textbf{0.898} & 3.86 & \textbf{0.850} & 3.99 & \textbf{0.810} & 3.92 \\
\bottomrule
\end{tabular}
\end{table}

\FloatBarrier
\begin{table}[htbp]
\centering
\setlength{\tabcolsep}{5pt}
\renewcommand{\arraystretch}{1.05}
\caption{Comparison of bit accuracy (B.Acc.$\uparrow$) and median perplexity (PPL) for \texttt{Llama2-7B} generations on \texttt{OpenGen} across payload sizes (12, 24, 36, and 48 bits). Results are computed over 400 generated tokens.}
\label{tab:OpenGen:Llama2_7B:temp1_0_cw4:n400:bacc_ppl}
\begin{tabular}{l|c|cc|cc|cc|cc}
\toprule
Watermark & S.D. & \multicolumn{2}{c|}{12 Bits} & \multicolumn{2}{c|}{24 Bits} & \multicolumn{2}{c|}{36 Bits} & \multicolumn{2}{c}{48 Bits} \\
\cmidrule(lr){3-4}\cmidrule(lr){5-6}\cmidrule(lr){7-8}\cmidrule(lr){9-10}
  &   & B.Acc.$\uparrow$ & PPL & B.Acc.$\uparrow$ & PPL & B.Acc.$\uparrow$ & PPL & B.Acc.$\uparrow$ & PPL \\
\midrule
w/o & --- & --- & 3.70 & --- & 3.70 & --- & 3.70 & --- & 3.70 \\
\midrule
MPAC (1, 2.0) & --- & 0.977 & 4.08 & 0.930 & 4.14 & 0.891 & 4.14 & 0.855 & 4.04 \\
MPAC (2, 2.0) & --- & 0.979 & 4.16 & 0.933 & 4.14 & 0.897 & 4.09 & 0.864 & 4.20 \\
StealthInk (1) & --- & 0.956 & 3.63 & 0.889 & 3.65 & 0.852 & 3.69 & 0.820 & 3.69 \\
StealthInk (2) & --- & 0.884 & 3.56 & 0.814 & 3.65 & 0.784 & 3.74 & 0.741 & 3.74 \\
BiMark (5) & --- & 0.914 & 3.64 & 0.822 & 3.64 & 0.784 & 3.62 & 0.759 & 3.63 \\
BiMark (10) & --- & 0.956 & 3.63 & 0.876 & 3.59 & 0.837 & 3.66 & 0.800 & 3.58 \\
BiMark (20) & --- & 0.975 & 3.53 & 0.932 & 3.51 & 0.900 & 3.61 & 0.869 & 3.59 \\
HeRo$^{1}$ (2) & --- & 0.992 & 3.60 & 0.979 & 3.68 & 0.958 & 3.65 & 0.939 & 3.65 \\
HeRo$^{1}$ (3) & --- & \textbf{0.994} & 3.61 & \textbf{0.989} & 3.63 & \textbf{0.972} & 3.68 & \textbf{0.954} & 3.68 \\
\midrule
HeRo$^{2}$ (1,1) & \checkmark & 0.964 & 3.59 & 0.912 & 3.61 & 0.875 & 3.64 & 0.842 & 3.64 \\
HeRo$^{2}$ (2,1) & \checkmark & 0.976 & 3.66 & 0.941 & 3.65 & 0.900 & 3.64 & 0.870 & 3.64 \\
HeRo$^{2}$ (2,2) & \checkmark & \textbf{0.985} & 3.63 & \textbf{0.945} & 3.59 & \textbf{0.916} & 3.68 & \textbf{0.876} & 3.61 \\
\bottomrule
\end{tabular}
\end{table}

\FloatBarrier

\subsection{Perplexity}
\label{app:full_exp_results_ppl}

We provide the complete perplexity (PPL) results corresponding to Section~\ref{sec:exp_detection}. Specifically, we report the distribution of PPL for \texttt{Llama2-7B} generations on \texttt{C4} and \texttt{OpenGen} datasets, varying the watermark payload size in $\{12, 24, 36, 48\}$ bits and the evaluation length in $\{200, 400\}$ generated tokens. For each configuration, we compare the unwatermarked baseline (w/o) against a comprehensive set of multi-bit watermarking methods, including additional hyperparameter settings beyond those presented in the main paper. Figures~\ref{fig:ppl:C4:Llama2_7B:200:12}--\ref{fig:ppl:OpenGen:Llama2_7B:400:48} summarize the results using violin plots.

\DeclareRobustCommand{\modelname}[1]{
    \StrSubstitute{#1}{2p5}{2.5}[\tempa]
    \StrSubstitute{\tempa}{_}{-}[\tempb]
    \tempb
}
\foreach \model in {Llama2_7B} {
\foreach \dataset in {C4,OpenGen} {
\foreach \ntr in {200,400} {
\foreach \msg in {12,24,36,48} {

\begin{figure}[!htbp]
    \centering
    \includegraphics[width=0.98\textwidth]{figures/\dataset/\model/temp1.0_cw4/ppl/ppl_violin_msg\msg_n\ntr.png}
    \caption[PPL violin plot]{Violin plots of perplexity (PPL) for \texttt{Llama2-7B} generations on \texttt{\dataset} under multi-bit watermarking methods with a \msg-bit payload and the unwatermarked baseline (w/o), computed over \ntr\ generated tokens.}
    \edef\pplfiglabel{fig:ppl:\dataset:\model:\ntr:\msg}
    \expandafter\label\expandafter{\pplfiglabel}
\end{figure}

}
}
}
}
\FloatBarrier

\subsection{Study of Hierarchy Depth}
\label{app:full_exp_hierarchy_depth}
To study how hierarchy depth affects detection performance and generation quality, we evaluate bit accuracy and perplexity under four configurations: HeRo$^1$(8), HeRo$^2$(4,4), HeRo$^4$(2,2,2,2), and HeRo$^8$ with eight 1-bit levels. Here, HeRo$^L(b_1,\cdots,b_L)$ denotes an $L$-level hierarchy in which the $\ell$-th layer discloses $b_\ell$ bits. 

As shown in Table~\ref{tab:hierarchy-depth-bitacc}, detection generally becomes more difficult as the hierarchy becomes deeper. We observe that the performance gap across settings becomes smaller as the token budget increases. More broadly, the performance depends jointly on the hierarchy depth and the partition structure across levels, and there remains room to further optimize deeper configurations. Table~\ref{tab:hierarchy-depth-ppl} reports perplexity across all settings. The PPL remains very close to that of text generated without watermarking, indicating that increasing the number of hierarchical divisions does not degrade generation quality. This is consistent with the statistical unbiasedness of our framework, which preserves the original sampling distribution.

\begin{table}[htbp]
\centering
\caption{Bit accuracy ($\uparrow$) under different hierarchy depths, evaluated with generation lengths of 200 and 400 tokens.}
\label{tab:hierarchy-depth-bitacc}
\begin{tabular}{l|c|c}
\toprule
Watermark & $n=200$ & $n=400$ \\
\midrule
HeRo$^1(8)$ & 0.997 & 0.998 \\
HeRo$^2(4,4)$ & 0.997 & 0.998 \\
HeRo$^4(2,2,2,2)$ & 0.963 & 0.988 \\
HeRo$^8(1,1,1,1,1,1,1,1)$ & 0.860 & 0.919 \\
\bottomrule
\end{tabular}
\end{table}

\FloatBarrier

\begin{table}[htbp]
\centering
\caption{Perplexity under different hierarchy depths, evaluated with generation lengths of 200 and 400 tokens.}
\label{tab:hierarchy-depth-ppl}
\begin{tabular}{l|c|c}
\toprule
Watermark & $n=200$ & $n=400$ \\
\midrule
w/o & 4.31 & 4.16 \\
\midrule
HeRo$^1(8)$ & 4.26 & 4.13 \\
HeRo$^2(4,4)$ & 4.31 & 4.15 \\
HeRo$^4(2,2,2,2)$ & 4.31 & 4.15 \\
HeRo$^8(1,1,1,1,1,1,1,1)$ & 4.30 & 4.13 \\
\bottomrule
\end{tabular}
\end{table}

\FloatBarrier

\section{Watermarking Examples}
\begin{table}[htbp]
\centering
\setlength{\tabcolsep}{3pt}
\renewcommand{\arraystretch}{0.94}

{\footnotesize
\begin{tabular}{p{0.75\linewidth} p{0.2\linewidth}}
\toprule
\textbf{Text} & \textbf{Message} \\
\midrule

\textbf{Prompt:}\par
I first met Amanda Vickery ten years ago - her book, 'The Gentleman's Daughter' had just been published, and she gave an interview to our local paper. Something about the interview made me think she would be good on the radio - her liveliness and her sense of fun came across, even in a print interview. I was right - when I called her, and we met for coffee, I realised that her warmth and her quick wit
&
\textbf{Ground Truth}\par
Public message:\par
\texttt{110 110 110 111} \par
Private message:\par
\texttt{100 001 111 011}
\\
\midrule

\textbf{Without Watermark:}\par
were the perfect match for radio. And I didn't realise, at that point, that she was a TV historian too - not just a historian, in fact, but a TV historian.\par
Her programmes have made her a star on the BBC. I am delighted that Amanda has written a book - as a star TV historian, she has a lot of fans, but her book is a chance to get to know her as a historian. She writes beautifully - it's like she's been on our radio programmes, talking to you. It's as if she's just popped round to chat about her life and work. I like that kind of historian - someone who is engaging, passionate, and who writes well. I'm looking forward to reading this book.\par
I can't resist giving you a quote from Amanda: 'I love reading, and I don't see myself giving up the habit for anything...' So I reckon she's the right person to write this book. She really does love her subject - and that shows.  I have already read it, and enjoyed every word.\par
If you want to be reminded of Amanda Vickery on Woman's Hour, there is a link below:\par
The Gentleman's Daughter\par
Amanda Vickery - A Short Introduction - part 1\par
Woman's Hour - Amanda Vickery\par
Mary Beard writes about the history of feminism, and the BBC's 'The Vagina Monologues'\par
Posted at 12:57 PM in Books, History, Interviews | Permalink | Comments (0)\par
The 21st February marks the 100th anniversary of the first women voting in Britain. &
\textbf{Decoded:}\par
Public message:\par
\texttt{11\textcolor{red}{1} 110 \textcolor{red}{0}10 11\textcolor{red}{0}} \par
Private message:\par
\texttt{\textcolor{red}{0}00 0\textcolor{red}{1}1  1\textcolor{red}{0}1 \textcolor{red}{1}\textcolor{red}{0}\textcolor{red}{0}}
\\
\midrule

\textbf{HeRo$^2$(1,1):}\par
were the perfect match for radio. We decided we should make a show, and the next day we were off and recording in the park.\par
This is the first interview that I ever did for Woman's Hour - and it was my first experience of interviewing on the radio. I had never met Amanda before, and I was nervous that she would think that I had no clue about what I was talking about. It helped enormously that she was so full of energy, and that she was so funny. I learned that, when interviewing for the radio, it is better to be yourself - and I also learned how to put questions in a way that makes the interviewee think that they are asking them!\par
Amanda and I recorded the Woman's Hour show in London, but we did a lot of the pre-interview on the phone. Amanda is not a woman who sits still, and she had to be at the BBC while I was in my office at work, but we were able to set up the recording and make sure it worked by chatting on the phone, and then we got together for an hour at the end of the day, with a glass of wine, to finish the editing, before taking it over to the studio.\par
Amanda was a joy to work with - she was open to the ideas we suggested, but she was also full of her own ideas, and she also had the wisdom to know when I should be in charge. When we finished, I felt that we had done a good interview, and that I could go on and do more. I am glad that I did.\par
I have worked with many actors on the radio, but not so many who have also done interviews on TV. 
& \textbf{Decoded:}\par
Public message:\par
\texttt{1\textcolor{red}{0}0 11\textcolor{red}{1}  110 111} \par
Private message:\par
\texttt{100 \textcolor{red}{1}01 111 011}
\\
\bottomrule
\end{tabular}
}
\vspace{0.4em}
\caption{Watermarked text samples for a 24-bit payload. The prompt row shows the ground-truth public/private message. Decoded messages are reported for each generation, with incorrect bits highlighted in red.}
\label{tab:wm-sample-24bit}
\end{table}

\FloatBarrier

\end{document}